\newtheorem{theorem}{Theorem}
\newtheorem{corollary}[theorem]{Corollary}
\newtheorem{proposition}[theorem]{Proposition}
\newtheorem{definition}{Definition}
\newtheorem{example}{Example}
\newtheorem{myex}{Example} 
\renewenvironment{example}{\begin{myex}\rm}{\hfill$\vartriangle$\end{myex}}
\newenvironment{namedexample}[1][\unskip]{\begin{myex}[#1]\rm}{\hfill$\vartriangle$\end{myex}}
\renewcommand{\paragraph}[1]{\vspace{5pt}\noindent\textbf{#1}}
\newcommand{\BibTeX}{B\kern-.05em{\sc i\kern-.025em b}\kern-.08em\TeX}
\newenvironment{newenv}{\color{black}}{}
\newcommand{\new}[1]{\textcolor{black}{#1}}
\newcommand{\notnew}[1]{\textcolor{black}{#1}}
\newcommand{\Pow}[1]{\mathcal{P}({#1})}
\newcommand{\vecPos}[2]{#2(#1)}
\newcommand{\rep}[2]{r_{#2}(#1)}
\newcommand{\ratio}{\textit{ratio}}
\newcommand{\DHondt}{\text{D'Hondt}_{\omega}}
\newcommand{\Adams}{\text{Adams}_{\omega}}
\newcommand{\cmark}{\ding{51}}%
\newcommand{\xmark}{\ding{55}}%
\newcommand{\prof}[1]{\boldsymbol{#1}}
\newcommand{\quota}{q}
\newcommand{\WLQo}{\text{WLQ}^o}
\newcommand{\WUQo}{\text{WUQ}^o}
\renewcommand{\vec}[1]{\boldsymbol{#1}}
\begin{document}


\begin{frontmatter}


\paperid{8822} 


\title{Apportionment with Weighted Seats}

\author[A]{\fnms{Julian}~\snm{Chingoma}}
\author[A]{\fnms{Ulle}~\snm{Endriss}}
\author[A]{\fnms{Ronald}~\snm{de Haan}}
\author[B]{\fnms{Adrian}~\snm{Haret}}
\author[C]{\fnms{Jan}~\snm{Maly}} 

\address[A]{ILLC, University of Amsterdam}
\address[B]{MCMP, LMU Munich}
\address[C]{DPKM, WU Vienna University of Economics and Business and DBAI, TU Wien}

\begin{abstract}
Apportionment is the task of assigning resources to entities with different entitlements in a fair manner, and specifically a manner that is as proportional as possible. The best-known application is the assignment of parliamentary seats to political parties based on their share in the popular vote. Here we enrich the standard model of apportionment by associating each seat with a weight representing the (objective) value of that seat. A seat's weight reflects the fact that different seats might come with different roles, such as chair or treasurer. We define several apportionment methods and natural fairness requirements for this new setting, and we study the extent to which our methods satisfy these requirements. Our findings show that full fairness is harder to achieve than in the standard apportionment setting. Yet, for several natural relaxations of those requirements we can achieve stronger results than in the more expressive model of fair division with entitlements, where the values of objects are subjective.
\end{abstract}

\end{frontmatter}


\section{Introduction}

Allocating resources in a proportional manner to entities with different entitlements, 
also known as \emph{apportionment}, 
is a core problem of social choice \cite{Balinski2005-xu}:
in federal systems (e.g., the US), states 
receive seats in parliament based on their populations,
while in proportional representation systems (e.g., the Netherlands),
parties receive seats based on their share in the popular vote.
\new{Elsewhere}, the need for apportionment arises 
in the context of fair allocation~\cite{ChakrabortySS21}, the presentation of
statistics \cite{Balinski93} and the handling of bankruptcies \cite{Csoka18}.
But in \new{line} with the paradigmatic example of apportionment, \new{for} the rest of the paper 
we stick to the terminology of seats being assigned to parties.

The merits of different apportionment methods are well understood due to 
an elegant mathematical theory developed for the political realm
\cite{BalinskiYoung1982,Pukelsheim2014}.
However, existing work 
remains limited by the assumption---often 
not met in practice---that all seats are of equal value. 
In this paper, we put forward an enriched model in which seats may have 
different weights reflecting \new{their (objective)} values.


There are numerous scenarios that fit this richer model,
including the distribution of non-liquid assets in bankruptcies 
to beneficiaries with different entitlements,
the assignment of positions on news websites to editorial domains (such as politics, business, or sports), 
based on the readership's relative levels of interest in those domains, and 
the way special-purpose committees are constituted in the \textit{Bundestag}, 
Germany's national parliament \cite{Bundestag2023}, a setting of particular interest that we will discuss in more detail later.

In our enriched model we associate each seat with a weight representing its objective value,
and approach proportionality through the lens of the total weight available.
We find that generalisations of the two central proportionality axioms of apportionment, \emph{lower} and \emph{upper quota}, are impossible to satisfy in general,
but that apportionment methods that faithfully extend well-known standard methods satisfy natural relaxations
of these axioms.
The relaxations, based on the concept of satisfaction `\emph{up to one}' and `\emph{up to any}' seat,
utilise an idea commonly seen in fair division \cite{Budish11,CaragiannisETAlTEAC2019}, 
that has also made its way to participatory budgeting \cite{pierczynskiSP21prop-PB-additive,brillFLMP2022prop-approval-PB}.
Additionally, we study \emph{envy-freeness}, one of the central axioms
in fair division \cite{AmanatidisBFV22},
which turns out to be related to upper quota.
We show that envy-freeness up to any seat, an axiom that
is not always achievable in the general fair division setting, is satisfiable in our setting.
Finally, we find that a direct generalisation of the \emph{house monotonicity} axiom
is prohibitively demanding, but weaker forms are readily satisfied by weighted counterparts of well-known apportionment methods.

\paragraph{Related work.}
Our aim is to assign seats of different weights to parties, similarly 
to how goods are assigned to agents in fair division \cite{AmanatidisBFV22}.
Fair division is a more general problem, because agents in fair division are allowed to have different valuations for the goods, while in our model the weights of the seats are the same for all parties.
There is also a subtle difference of focus: in much of fair division---and certainly the part of the literature considering relaxations of classical fairness notions---it is assumed that all agents deserve the same utility; in apportionment the central concerns stem directly from the fact that parties may have different entitlements.
This divide is bridged by a growing literature on fair
division for agents with different entitlements 
\cite{FarhadiGHLPSSY2019fairAllo,Aziz2020,BabaioffEF21,suksompong2024weighted},
which studies the possibility of achieving proportional allocations for
agents with (positive) cardinal utilities. 
This model of weighted fair division can be seen as a generalisation of our model of apportionment with weighted seats.\footnote{Note that in the former setting, `weights' refer to the weights of agents (i.e., their entitlements) whereas we use it to refer to the weights of seats.} 

Most closely related to our work is a paper by \citet{ChakrabortySS21}, 
to which we will refer often, that uses apportionment methods
to produce picking sequences guaranteeing fair allocations in a fair division setting.
This model can represent a wider range of scenarios than ours, albeit at a cost. 
The increased generality obtained by allowing agents to have different valuations 
makes it much harder to achieve proportionality. 
Indeed, we will see that the fairness guarantees provided by the fair division literature are 
significantly weaker than the ones achievable in our apportionment setting.

Let us now briefly review some further related lines of research from the computational social choice literature.
The first consists of a family of models extending the standard model of apportionment,
be it by allowing voters to cast ballots for more than one party \cite{BrillGPSW20},
by allowing voters to rank the parties \cite{AiriauACKLP23},
or by taking a long-term perspective on apportionment
\cite{Harrenstein22}.
More broadly, approval-based multiwinner voting \cite{LacknerS23} and related models \cite[e.g.,][]{chandakGP24,masarikPS2024}
can be seen as generalisations of apportionment \cite{BrillLS17}.
However, all of these works still assume that all seats are of equal value.

Participatory budgeting \cite{Rey23} 
generalises multiwinner voting by adding weights (candidates become projects with attached costs). 
This may look similar to the generalisation we propose here, but there are crucial differences: in participatory budgeting weights are attached to party members rather than seats; and seats in our setting are \emph{private} (each is assigned to a single stakeholder) whereas funded projects in participatory budgeting are \emph{public} (each is shared by all stakeholders). We will see later that this difference also manifests itself in the normative properties one can achieve.

Finally, our model can be seen as a special case of the public-deci\-sions model of \citet{conitzerF017} when each issue's alternatives are exactly the parties, and a party's utility for a seat is that seat's weight.


\paragraph{Contribution.}
We introduce a new model of apportionment with weighted seats that---in terms of expressivity---is located somewhere between the standard model (where seats are unweighted) and a previously studied model of fair division with entitlements.

The main take-away of our analysis is that obtaining good solutions to the apportionment problem is harder in the presence of weights, but positive results nonetheless are achievable 
for mild relaxations of standard axiomatic properties when appropriately generalised to our richer model.
(The reader will find an overview of our main axiomatic results in Table~\ref{tab:axiomTable} near the end of the paper.)
We stress that we obtain stronger guarantees regarding the satisfaction of axioms than what is possible for the weighted fair division setting with subjective weights. We do so by using the objective weights to define weighted variants of several apportionment methods. 


\paragraph{Outline.} 
Section~\ref{sec:model} introduces the weighted apportionment model.
Sections~\ref{sec:lower-quota} and~\ref{sec:upper-quota} 
offer an axiomatic and a computational analysis of lower and upper quota properties, 
while
Section~\ref{sec:house_mon} studies house monotonicity.
Section~\ref{sec:bundestag_study} presents a case study with real-world data.
Omitted proofs can be found in the Appendix.

\section{The Model}\label{sec:model}

We write $\Pow{U}$ for the sets of all subsets of a set \(U\), and $[k]$ for the set $\{1,\ldots,k\}$, where $k$ is a positive integer.

In our 
model there are $n$ \emph{voters}, 
each voting for one of $m$ \emph{parties}, and
the goal is to fill $k$ \emph{seats} of varying \new{(objective)}
value with members of those parties, based on the votes.%

We make two mild assumptions throughout. 
First, each party is approved by at least one voter. 
Second, to use a term borrowed from \citet{langS2018multi-attribute}, 
we assume \emph{full supply}: 
each party has at least $k$ members, and is thus able to fill all available seats by itself.

A \emph{vote vector} $\prof{v} = (v_{1},\ldots,v_{m}) \in \new{[n]^m}$, 
with \(v_1 + \dots + v_m = n\),
specifies how many votes (out of the total number~\(n\)) each party $p \in [m]$ garnered.
Each seat $t \in [k]$ is associated with a weight $w_t$ indicating how valuable \(t\) is. 
The $k$ seats to be filled are described 
by the \emph{weight vector} $\vec{w} = (w_{1},\ldots,w_{k})\in \mathbb{N}_{\geqslant 1}^{k}$, 
which lists the weights of the seats in non-increasing order. 
The \emph{total weight} is $\omega = \sum_{t\in[k]} w_{t}$. 
A \emph{seat assignment} is a vector $\vec{s} = (s_{1},\ldots,s_{k}) \in [m]^{k}$, 
where $s_{t}=p$ means that party $p\in [m]$ is assigned seat $t \in [k]$ with weight $w_{t}$. 
Given a seat assignment $\vec{s}$, we write $\vecPos{p}{\vec{s}} = (t)_{s_{t} = p}$ for the vector of seats, 
in increasing order of index, assigned to party~$p$ under seat assignment~$\vec{s}$.
An \emph{election instance} is a pair $(\prof{v}, \vec{w})$ of a vote vector~$\prof{v}$ and a weight vector~$\vec{w}$. 
We speak of a \emph{unit-weight} instance in case $w_{t} = w_{t'}$ for all $t,t' \in [k]$.\footnote{The standard model of apportionment deals with such unit-weight instances.}

The core question of apportionment is how to distribute the available seats to parties in a \emph{proportional} manner. 
This is typically formalised in terms of a party's \emph{quota}, i.e.,
the proportion of seats each party is entitled to. 
We do the same in our weighted setting, with the important caveat that the quota in this case 
is construed in terms of the total weight:
the \emph{quota of party} $p$ is defined as $\quota(p) = \omega\cdot\nicefrac{v_p}{n}$.

To judge whether a party satisfies its quota, 
we need to reason about the weights of the seats it obtained.
This leads us to the notion of \emph{representation}. 
Formally, the \emph{representation of party $p$} derived 
from seat assignment $\vec{s}$ is $\rep{p}{\vec{s}} = \sum_{t\in \vecPos{p}{\vec{s}}} w_{t}$, 
i.e., the sum of the weights of the seats assigned to \(p\) according to \(\vec{s}\).
For a weight vector $\vec{w}$, the set of all possible representation values a party can obtain from occupying up to $h \in [k]$ seats can be computed as follows: 
\[
        R(\vec{w})_{[h]} = \left\{\,\sum_{t \in T} w_t \mid T \in \Pow{[k]}\ \text{with}\ |T| \leqslant h \,\right\}.
\]


\noindent
\new{We now turn to} the methods \new{used} to assign seats to parties.
A \emph{weighted-seat assignment method} (WSAM) $F$ takes an election instance $(\prof{v}, \vec{w})$ as input 
and maps it to a winning seat assignment $F(\prof{v}, \vec{w})$.
We focus on two types of WSAMs, which generalise the most prominent methods for standard apportionment 
\cite{BalinskiYoung1982}. 

\begin{definition}[Divisor methods]
	Fix a function $f:\mathbb{R}\times\mathbb{R}\to\mathbb{R}$.
    Given an election instance $(\prof{v},\vec{w})$, 
    the divisor method for $f$ works in $k$ rounds. 
    In round $t\in [k]$, seat~$t$ is given
    to the party~$p$ maximising: 
    \[
        \ratio_p =
        \begin{cases}
            \frac{v_{p}}{f(g_{p}(t),w_{t})} & \text{if } f(g_{p}(t),w_{t}) \neq 0 \\
            \infty & \text{if } f(g_{p}(t),w_{t}) = 0,
        \end{cases}
    \]
    where $g_{p}(t)$ is the sum of the weights of the seats given
    to party~$p$ in earlier rounds.
    If required, a tie-breaking rule is used to choose between parties with equal ratio. 
\end{definition}

\noindent
Intuitively, divisor methods allocate the available seats sequentially, starting with the
most valuable seat, and based on the ratio between $v_{p}$ and $f(g_{p}(t),w_{t})$.
Note that the use of \(g_p(t)\), i.e., the total \emph{weight} (rather than the number) 
of seats assigned to party \(p\)  by round \(t\) reflects our strategy for generalising apportionment 
to the weighted setting.  
It is, of course, possible to allocate the seats in a different (fixed) order but, 
to anticipate results to come, starting with the most valuable seat leads to particularly nice axiomatic properties.

Only certain choices for the function $f$ lead to reasonable divisor methods. 
In the unit-weight apportionment setting, it is common to set $f(g_{p}(t),w_{t})$ to 
$g_p(t)$ (Adams), $g_p(t) + 0.5$ (Sainte-Lagu\"{e}), or $g_p(t) + 1$ (D'Hondt).\footnote{Sainte-Lagu\"{e}, and D'Hondt are also known as \emph{Webster} and \emph{Jefferson}.}
As our focus is on upper and lower quota, 
we narrow attention to Adams, the unique divisor method satisfying upper
quota, and D'Hondt, the unique divisor method satisfying lower quota \cite{Balinski1975-mn}. These rules can be generalised to our setting as follows. 

\begin{definition}[$\Adams$ and $\DHondt$]
    $\Adams$ is the divisor method defined by $f(g_p(t),w_t) = g_p(t)$ and 
    $\DHondt$ is the divisor method defined by $f(g_p(t),w_t) = g_p(t) + w_t$.
\end{definition}

\noindent 
\new{Second on our list, the largest remainder method} (LRM)
assigns each party their lower quota of seats,
as defined below, and then assigns the remaining seats based on the fractional remainder
of each party's quota.\footnote{LRM is also known as the \emph{Hamilton} method in the literature.} But as we will see (Proposition~\ref{prop:WLQ_may_not_exist}), this would not work in the weighted setting. Instead, we put forward the following procedure.

\begin{definition}[Greedy Method]
   In each round $t\in[k]$, the seat $t$ with weight $w_{t}\in\vec{w}$ is assigned to the party~$p$ for which 
   $\quota(p) - g_{p}(t)$ is maximal, with ties broken arbitrarily \new{whenever they arise}. 
\end{definition}

\noindent
Without weights, the Greedy method reduces to~LRM. 

\begin{example}
    \new{Consider three parties obtaining votes} $\prof{v} = (60,30,10)$
    and four seats 
    \new{of weights}
    $\vec{w} = (10, 6, 4, 2)$
    waiting to be filled. 
    \(\Adams\) maximises the ratio \(\nicefrac{v_p}{g_p(t)}\). 
    Since \(g_p(t) = 0\) before a party receives any seats, each party gets a seat
    after the first three rounds; assume tie-breaking assigns party \(1\), \(2\) and \(3\)
    seat \(1\), \(2\) and \(3\), respectively, for the partial assignment \(\vec{s} = (1,2,3,\_)\).
    At round \(t=4\), \(\ratio_p\) is maximised by party \(1\), with \(\ratio_1 = \nicefrac{60}{10}\) 
    versus \(\ratio_2 = \nicefrac{30}{6}\) and \(\ratio_3 = \nicefrac{10}{4}\).
    The final assignment is \(\vec{s} = (1,2,3,1)\).
    \(\DHondt\) maximises the ratio \(\nicefrac{v_p}{(g_p(t)+w_t)}\).
    Assigning the first seat to party \(1\) gives a ratio of \(\nicefrac{60}{(0+10)}\),
    versus \(\nicefrac{30}{(0+10)}\) and \(\nicefrac{10}{(0+10)}\) for parties \(2\) and \(3\), respectively,
    so this seat goes to party \(1\). The second seat goes to party \(2\). For the third seat
    we calculate \(\ratio_1=\nicefrac{60}{(10+4)}\), \(\ratio_2=\nicefrac{30}{(6+4)}\) and 
    \(\ratio_3=\nicefrac{10}{(0+4)}\), so this seat goes to party \(1\). The final assignment is
    \(\vec{s} = (1,2,1,3)\).
    For the Greedy method, the quotas $\quota(p) = \omega\cdot\nicefrac{v_{p}}{n}$ are $\quota(1) = 13.2$, $\quota(2) = 6.6$, and $\quota(3) = 2.2$. So Greedy returns the seat assignment \(\vec{s} = (1,2,1,3)\), just as \(\DHondt\).
\end{example}

\noindent
Note that the WSAMs defined above use the total weight of the seats to determine the assignment, 
in contrast to \citet{ChakrabortySS21} who use 
the number of seats assigned to each party.
Thus, the methods of \citeauthor{ChakrabortySS21} do not directly generalise our WSAMs.

Regarding terminology, while the axioms \new{that} follow are defined as properties of seat assignments, we say that a WSAM $F$ \emph{satisfies property $\mathcal{P}$} if for every election instance $(\prof{v},\vec{w})$ it is the case that every seat assignment $\vec{s}\in F(\prof{v},\vec{w})$ satisfies property~$\mathcal{P}$.


\section{Lower Quota}\label{sec:lower-quota}

In the standard apportionment setting, a perfectly proportional allocation 
would give each party \(p\) 
the share of seats that corresponds precisely to its vote share.
Since there is no guarantee that this share, calculated as $k\cdot\nicefrac{v_p}{n}$, is an integer, 
the immediate fallback is a \emph{lower quota} axiom stating that each party
$p$ should receive \emph{at least} $\lfloor k\cdot\nicefrac{v_{p}}{n} \rfloor$ seats
\cite{BalinskiYoung1982,Pukelsheim2014}. 
For our weighted-seat setting it would thus be natural to define the
\emph{weighted lower quota} 
as
$\lfloor \omega\cdot\nicefrac{v_{p}}{n}\rfloor$.
However, the following example shows that such a quota is not guaranteed to be achievable,
even in the simplest case of two parties and two seats.

\begin{example}
    Consider two parties with $\prof{v} = (1,1)$
    and $\vec{w} = (99,1)$. So $\lfloor \omega\cdot\nicefrac{v_{p}}{n}\rfloor = 50$ for both parties $p\in [2]$, but there is no way for both to receive seats with weight at least \(50\).
\end{example}

\noindent
Intuitively, the problem is that there may be no combination of seats that would
give each party its weighted lower quota. As a workaround, we restrict the lower quota of 
party $p$ to the values $p$ can achieve with the number of seats it 
deserves, i.e., $\ell^\#(p) = \lfloor k\cdot\nicefrac{v_{p}}{n}\rfloor$. 
We then use this quantity to determine the party's \emph{obtainable lower quota of weights}
\(
    \ell^o(p) =  
    \max\left\{w\in R(\vec{w})_{[\ell^\#(p)]} \mid w\leqslant \quota(p)\right\}.
\)
%
We can now define our first proportionality property.

\begin{definition}[Obtainable Weighted-seat Lower Quota, $\WLQo$]
A seat assignment $\vec{s}$ provides $\WLQo$ 
if for every party $p$ it is the case that $\rep{p}{\vec{s}} \geqslant \ell^o(p)$.
\end{definition}

\noindent
Note that for unit weights, $\WLQo$ is equivalent to the standard lower quota.
But in the weighted setting, computing $\ell^o(p)$ 
requires solving a \textsc{SubsetSum} problem and can hence not be done in polynomial time,
unless $\P = \NP$.%
\footnote{We assume that weights are encoded in binary.}
Still, there is good news for simple scenarios.

\begin{proposition}\label{prop:WLQ_exists_2_parties}
For every election instance with two parties there exists a seat assignment 
that satisfies $\WLQo$.
\end{proposition}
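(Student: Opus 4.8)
The plan is to exploit the fact that with only two parties a seat assignment is nothing but a partition of the $k$ seats into a set going to party $1$ and a set going to party $2$, so that $\rep{1}{\vec{s}} + \rep{2}{\vec{s}} = \omega$. I would therefore fix a subset of seats that exactly realises \emph{one} party's obtainable lower quota, hand it to that party, and assign every remaining seat to the other party, aiming to show that the leftover weight is automatically enough for the second party.

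Two elementary observations drive the argument. First, by definition $\ell^o(p) = \max\{w \in R(\vec{w})_{[\ell^\#(p)]} \mid w \leqslant \quota(p)\}$, and since the empty subset contributes the value $0 \leqslant \quota(p)$ to $R(\vec{w})_{[\ell^\#(p)]}$, this maximum is taken over a non-empty finite set; hence for each party there is a witness set $S_p^* \subseteq [k]$ with $|S_p^*| \leqslant \ell^\#(p)$ and $\sum_{t \in S_p^*} w_t = \ell^o(p) \leqslant \quota(p)$. Second, the two quotas are complementary: $\quota(1) + \quota(2) = \omega \cdot \nicefrac{(v_1 + v_2)}{n} = \omega$.

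I would then construct $\vec{s}$ by assigning the witness seats $S_1^*$ to party $1$ and all remaining seats to party $2$ (the full-supply assumption makes this a legal assignment). This immediately gives $\rep{1}{\vec{s}} = \ell^o(1)$, so party $1$'s bound holds with equality. For party $2$ I would chain the two observations: $\rep{2}{\vec{s}} = \omega - \ell^o(1) \geqslant \omega - \quota(1) = \quota(2) \geqslant \ell^o(2)$, where the first inequality uses $\ell^o(1) \leqslant \quota(1)$, the middle equality uses complementarity, and the last inequality holds because $\ell^o(2)$ is by definition capped by $\quota(2)$. Both parties meet their obtainable lower quota, so $\vec{s}$ provides $\WLQo$. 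Note that the size bound $|S_1^*| \leqslant \ell^\#(1)$ is only needed to know the witness exists; the $\WLQo$ axiom itself places no cap on how many seats a party holds, so letting party $2$ absorb the (possibly many) remaining seats is harmless.

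The step I expect to carry the weight of the proof is the observation that overshooting party $1$'s target cannot hurt: because there are exactly two parties, whatever weight party $1$ leaves behind is absorbed entirely by party $2$, and the complementarity of the quotas guarantees this residual already clears party $2$'s (quota-capped) bound. This is precisely the feature I would expect to break down for three or more parties, where the residual weight must be split further and no single party can be relied upon to soak up the slack; so the simple complementation trick should not generalise, which is presumably why the statement is restricted to two parties.
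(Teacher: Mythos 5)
Your proposal is correct and matches the paper's proof essentially verbatim: both assign party $1$ a witness set realising $\ell^o(1)$ exactly, give the rest to party $2$, and conclude via the chain $\omega - \ell^o(1) \geqslant \omega - \quota(1) = \quota(2) \geqslant \ell^o(2)$. Your added remarks on the non-emptiness of the witness set and on why the argument fails for three parties are sound but not needed beyond what the paper states.
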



\noindent
But with more than two parties,
$\WLQo$ is not always achievable. 

\begin{proposition}\label{prop:WLQ_may_not_exist}
There are election instances for which there exists no seat assignment that provides~$\WLQo$.
\end{proposition}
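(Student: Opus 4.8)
The plan is to exhibit a single explicit instance and verify that no seat assignment provides $\WLQo$. Before constructing it, let me flag the main obstacle: a naive counting argument cannot work, since $\sum_p \ell^o(p) \leqslant \sum_p \quota(p) = \omega$ always holds, so the total available weight never falls short of the total demand. The obstruction must therefore be \emph{combinatorial}---it has to come from the indivisibility of the seats, i.e., from the fact that the thresholds $\ell^o(p)$ can be reached only by particular subsets of seats that cannot all be realised at once.

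With this in mind, I would take three parties with $\prof{v} = (1,1,1)$ (so $n = 3$) and weight vector $\vec{w} = (5,3,1)$ (so $k = 3$ and $\omega = 9$). The first step is to compute the relevant quantities. Each party has $\quota(p) = \omega \cdot \nicefrac{v_p}{n} = 3$ and $\ell^\#(p) = \lfloor k \cdot \nicefrac{v_p}{n}\rfloor = 1$, so $R(\vec{w})_{[\ell^\#(p)]} = R(\vec{w})_{[1]} = \{0,5,3,1\}$. Hence $\ell^o(p) = \max\{w \in \{0,5,3,1\} \mid w \leqslant 3\} = 3$ for all three parties, and so $\WLQo$ here amounts to the requirement that $\rep{p}{\vec{s}} \geqslant 3$ for every party $p$.

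The second step is to show that no seat assignment $\vec{s}$ meets this requirement for all three parties simultaneously. Here I would exploit tightness: the total weight $\omega = 9$ equals the summed demand $3 + 3 + 3$, so any feasible assignment would have to give each party representation \emph{exactly} $3$, i.e., it would have to partition the weight multiset $\{5,3,1\}$ into three parts each summing to $3$. The key observation is that the only subset of $\{5,3,1\}$ summing to $3$ is the singleton $\{3\}$; thus at most one party can reach its threshold, and the remaining seats of weights $5$ and $1$ cannot be split into two parts each summing to at least $3$. A short case check over how the three seats are distributed among the three parties (each seat going to one party, parties allowed to receive zero or several seats) then confirms that in every case at least one party falls below $3$.

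I expect the only delicate point to be making the case analysis airtight while keeping it brief; framing it through the exact-partition/tightness argument above, rather than enumerating all assignments, is what keeps it clean. The same template---choosing weights so that each party's obtainable quota can be met only via a scarce, non-shareable seat---scales readily to larger instances should a more robust family of counterexamples be desired.
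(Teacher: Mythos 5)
Your proposal is correct and follows essentially the same route as the paper, which uses the instance $\prof{v}=(1,1,1)$, $\vec{w}=(3,2,1)$ (giving $\ell^o(p)=2$ for all $p$) in place of your $\vec{w}=(5,3,1)$; both counterexamples rely on the same tightness phenomenon, namely that the quotas sum to $\omega$ but only one seat can realise the required value. Your explicit exact-partition argument is a slightly more detailed verification of what the paper asserts in one line.
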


\begin{proof}
Consider vote vector $\prof{v} = (1,1,1)$ for three parties, and weight vector $\vec{w} = (3,2,1)$. 
We get $\ell^o(p) = 2$ for each party $p\in[3]$. But there exists no seat assignment that 
provides at least a weight of~$2$ to all three parties.
\end{proof}

\noindent
While $\WLQo$ cannot always be satisfied, 
one might still ask for a WSAM that delivers an allocation satisfying $\WLQo$ 
on instances where this is possible. 
Unfortunately, the following result shows that 
this is computationally intractable. 
The proof 
involves a reduction from the well-known \NP-complete problem $\textsc{Partition}$.

\begin{proposition}\label{prop:WLQ_is_NP-hard_to_find}
    If there exists a polynomial-time algorithm $\alpha$ that finds a seat assignment $\vec{s}$ that provides $\WLQo$ whenever such a seat assignment exists, then $\P=\NP$.
    This holds even when restricted to the case where there are only two parties.
\end{proposition}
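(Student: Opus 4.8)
The plan is to reduce from \textsc{Partition}: given positive integers $a_1,\dots,a_k$ with $\sum_t a_t = \omega$, decide whether some index set $T\subseteq[k]$ satisfies $\sum_{t\in T}a_t=\omega/2$ (if $\omega$ is odd the answer is trivially \emph{no}, so I assume $\omega$ is even). From such an instance I would build the election instance with exactly \emph{two} parties, equal votes $\prof{v}=(1,1)$ (so $n=2$), and weight vector $\vec{w}$ obtained by listing $a_1,\dots,a_k$ in non-increasing order. Then the total weight is $\omega$, both quotas equal $\quota(1)=\quota(2)=\omega/2$, and $\ell^\#(1)=\ell^\#(2)=\lfloor k/2\rfloor$. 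The decision procedure is simply: run the hypothesised algorithm $\alpha$ on $(\prof{v},\vec{w})$, compute $\rep{1}{\vec{s}}$ for the returned assignment $\vec{s}$, and answer \textbf{yes} if and only if $\rep{1}{\vec{s}}=\omega/2$. The crucial point is that this final test is a \emph{polynomial-time} check that reads off an explicit value, rather than an attempt to verify $\WLQo$ of the output, which would itself require computing $\ell^o$ and hence be intractable.

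The heart of the argument is the claim that the constructed instance corresponds to a yes-instance of \textsc{Partition} if and only if $\ell^o(1)=\ell^o(2)=\omega/2$. Here I expect the cardinality cap $\lfloor k/2\rfloor$ hidden inside $R(\vec{w})_{[\ell^\#(p)]}$ to be the main obstacle: a priori a subset summing to $\omega/2$ could use more than $\lfloor k/2\rfloor$ seats and therefore fail to witness $\ell^o(p)=\omega/2$. The key observation that dissolves this difficulty is a \emph{complementation} argument: if $T$ sums to $\omega/2$, then its complement $[k]\setminus T$ also sums to $\omega-\omega/2=\omega/2$, and one of the two has cardinality at most $\lfloor k/2\rfloor$. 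Hence in a yes-instance $\omega/2\in R(\vec{w})_{[\lfloor k/2\rfloor]}$, and since $\omega/2=\quota(p)$ is the upper cap in the definition of $\ell^o$, we get $\ell^o(p)=\omega/2$. Conversely, in a no-instance no subset of the weights sums to $\omega/2$, so the largest attainable weight not exceeding $\omega/2$ is strictly below $\omega/2$, giving $\ell^o(p)<\omega/2$.

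It then remains to assemble the correctness of the reduction. In a yes-instance, $\ell^o(1)=\ell^o(2)=\omega/2$, and a $\WLQo$ assignment \emph{does} exist (assign one party the seats in a balanced half and the other party the rest, so that both obtain representation exactly $\omega/2$); by the promise on $\alpha$, it must return some assignment $\vec{s}$ with $\rep{1}{\vec{s}}\geqslant\omega/2$ and $\rep{2}{\vec{s}}\geqslant\omega/2$, and since $\rep{1}{\vec{s}}+\rep{2}{\vec{s}}=\omega$ this forces $\rep{1}{\vec{s}}=\omega/2$, so the procedure answers \emph{yes}. In a no-instance, for \emph{any} assignment $\vec{s}$ the value $\rep{1}{\vec{s}}$ is a subset sum of the weights, and it cannot equal $\omega/2$ without contradicting the assumption; thus the procedure answers \emph{no} regardless of what $\alpha$ outputs, so we never rely on the validity of $\alpha$'s answer on no-instances. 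Since the whole reduction and the final test run in polynomial time, the existence of $\alpha$ would place \textsc{Partition} in $\P$, yielding $\P=\NP$; and as the construction uses only two parties, the hardness persists under that restriction.
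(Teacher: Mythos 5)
Your proposal is correct and follows essentially the same reduction as the paper: the same two-party \textsc{Partition} instance with equal votes, the same observation that a balanced partition forces $\ell^o(1)=\ell^o(2)=\omega/2$ (your complementation argument for the cardinality cap $\lfloor k/2\rfloor$ makes explicit what the paper states tersely via $\min(|X_1|,|X_2|)\leqslant \nicefrac{k}{2}$), and the same final polynomial-time test $\rep{1}{\vec{s}}=\nicefrac{\omega}{2}$. No gaps.
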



\noindent
But with extra assumptions, we obtain a more positive \new{result}.

\begin{proposition}\label{prop:WLQ_pseudopoly_to_find}
    \new{For} a constant number of parties and weights in $\vec{w}$ \new{that are} polynomial in the input size, finding a seat assignment $\vec{s}$ that provides $\WLQo$ can be done in polynomial time, assuming such a seat assignment exists.
\end{proposition}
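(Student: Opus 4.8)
The plan is to solve the problem by dynamic programming, exploiting two facts: with weights polynomial in the input size the total weight $\omega$ is polynomial as well, and with a constant number $m$ of parties we can afford to track the representations of all parties simultaneously.

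First I would compute, for each party $p$, the threshold $\ell^o(p)$ that it must reach. Recall $\ell^o(p)$ is the largest element of $R(\vec{w})_{[\ell^\#(p)]}$ not exceeding $\quota(p)$. Although this is a \textsc{SubsetSum} computation, and hence intractable in general (as already observed for $\ell^o$), it becomes tractable under our assumptions: I would fill a Boolean table $A[j,h]$ recording whether a total weight of exactly $j$ is achievable using at most $h$ seats, for $0 \leq j \leq \omega$ and $0 \leq h \leq k$. This table has only $(\omega+1)(k+1)$ entries, which is polynomial since $\omega$ is polynomial, and each entry follows from previously computed ones by the usual recurrence that either uses or skips seat $t$. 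Then $\ell^o(p)$ is read off as the largest $j \leq \quota(p)$ with $A[j,\ell^\#(p)]$ true.

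The core step is to decide whether the seats can be partitioned among the $m$ parties so that each party $p$ attains $\rep{p}{\vec{s}} \geqslant \ell^o(p)$, and to construct such a partition. I would run a reachability dynamic program over states $(r_1,\dots,r_m)$, where $r_i$ is the representation accumulated by party $i$ so far, capped at $\ell^o(i)$; capping is sound because $\WLQo$ is a pure lower-bound condition on weights, so any surplus beyond a party's threshold is irrelevant and no seat-count constraint applies to the final assignment. Starting from $(0,\dots,0)$, I process the seats $t = 1,\dots,k$ in turn, and from each reachable state and each party $p$ I generate the successor in which coordinate $p$ is increased by $w_t$ and then capped at $\ell^o(p)$, recording a back-pointer. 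Since every coordinate takes at most $\omega+1$ values and $m$ is constant, the number of states is at most $(\omega+1)^m$, which is polynomial; processing each of the $k$ seats visits each state and tries $m$ assignments, so the whole procedure runs in polynomial time.

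Finally, after all seats are processed, I check whether the fully-satisfied state $(\ell^o(1),\dots,\ell^o(m))$ is reachable. By the soundness of capping this state is reachable exactly when a seat assignment providing $\WLQo$ exists; since existence is assumed, it will be, and I recover $\vec{s}$ by following the back-pointers. The main obstacle is not algorithmic but conceptual: one must check that capping at the thresholds preserves correctness (it does, by monotonicity of the lower-bound condition) and that \emph{both} assumptions---$m$ constant and weights polynomial---are genuinely used, since together they are precisely what bounds the state space by $(\omega+1)^m$.
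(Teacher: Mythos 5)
Your proposal is correct and follows essentially the same route as the paper's proof: a dynamic program over tuples of per-party accumulated seat weights, processed seat by seat, combined with a separate (cardinality-bounded) subset-sum dynamic program to compute the thresholds $\ell^o(p)$, all polynomial because $m$ is constant and $\omega$ is polynomially bounded. Your additional refinements---capping each coordinate at $\ell^o(p)$ with an explicit soundness argument, and spelling out the $A[j,h]$ table for the bounded subset-sum step---are harmless and, if anything, slightly more careful than the paper's version.
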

\begin{proof}
    We describe a dynamic programming algorithm that finds a seat assignment $\vec{s}$ providing $\WLQo$,whenever one exists. 
    
    Consider an election instance with $m$ parties.
    The algorithm works as follows. For each $i\in [k]$ (where $i$ represents the number of seats assessed thus far), it computes $\mathcal{W}_{i}$ which is a set of tuples of the form $(W_{1},\ldots,W_{m})$. Here, $W_{p}$ indicates the sum of seat weights of the seats assigned to party $p\in[m]$. 

    Each $\mathcal{W}_{i+1}$ can be computed using $\mathcal{W}_{i}$ and the weight $w_{i+1}$, by looking at every combination of some tuple in $\mathcal{W}_{i}$ and some choice of party to assign the weight-$w_{i+1}$ seat to. Once $\mathcal{W}_{k}$ is computed, we can check every tuple in $\mathcal{W}_{k}$ and for each tuple, assess whether it satisfies $\WLQo$ (which can be done in polynomial time). Specifically, this check can be done for each tuple $(W_{1},\ldots,W_{m})$ by assessing whether $W_{p}\geqslant \ell^{o}(p)$ for every party $p\in[m]$. From the assumption on the weights in $\vec{w}$ and the observation that computing $\ell^{o}$ requires solving an instance of Subset Sum, we can apply a dynamic programming algorithm for the latter problem to compute $\ell^{o}(p)$ in polynomial time. And finally, there are at most $\omega^{2m}$ such tuples, which is polynomially many in the input size due to the assumptions on the weights in $\vec{w}$ and the number of parties being constant. Thus, this algorithm runs in polynomial time.
\end{proof}


\noindent
The assumptions of Proposition~\ref{prop:WLQ_pseudopoly_to_find} may be restrictive, 
but they fit the scenarios envisioned for our model, which
are not likely to feature a number of parties, or weight values, exponential in the input size.\footnote{We note that it remains unclear whether there exists a pseudo-polynomial-time algorithm for the case of a superconstant number of parties.}

We have, as of yet, made no inroads towards our goal of finding an achievable lower quota property. 
To do so, it is helpful to
\new{view} lower quota in the unit-weight setting 
\new{from a different perspective:}
instead of 
\new{thinking of it as the}
closest value to the quota that can be obtained in practice, 
we interpret it as guaranteeing that each party~$p$ is at most one seat away from \emph{surpassing} its quota. To make this interpretation 
work with weighted seats one must specify which seat, amongst those not assigned to it, a party has to additionally receive in order to surpass its quota. We parse this in \new{three} ways.

\begin{definition}[WLQ up to one seat, WLQ-1]
    A seat assignment $\vec{s}$ provides WLQ-1 if, 
    for every party $p$, either $\rep{p}{\vec{s}} \geqslant \quota(p)$, or
    there exists some seat $t\in [k]\setminus{\{t' \in \vecPos{p}{\vec{s}}\}}$ 
    such that $\rep{p}{\vec{s}} + w_t > \quota(p)$.
\end{definition}

\begin{definition}[WLQ up to any seat, WLQ-X]
    A seat assignment $\vec{s}$ provides WLQ-X if, for every party $p$, either $\rep{p}{\vec{s}} \geqslant \quota(p)$ or for every seat $t \in [k]\setminus{\{t' \in \vecPos{p}{\vec{s}}\}}$, it holds that $\rep{p}{\vec{s}} + w_t > \quota(p)$.
\end{definition}

    \begin{definition}[WLQ up to any seat from a sufficiently represented party, WLQ-X-r]
        A seat assignment $\vec{s}$ provides WLQ-X-r if, for every party $p$, either $\rep{p}{\vec{s}} \geqslant \quota(p)$ or for every seat $t \in \{t' \in \vecPos{p^{*}}{\vec{s}}\mid p^{*}\in[m]\setminus\{p\},\rep{p^{*}}{\vec{s}} > \quota(p^{*})\}$, it holds that $\rep{p}{\vec{s}} + w_t > \quota(p)$.
    \end{definition}   

\noindent
WLQ-1 states that for each party $p$, there exists a seat it can additionally receive so as to surpass $q(p)$. WLQ-X states that each party $p$ would surpass $q(p)$ if it were to receive any one of the additional seats. \new{WLQ-X-r can then be seen as a weakening of WLQ-X where not all seats are considered, but only the seats that have been assigned to parties that have exceeded their representation quota. The intuition behind this requirement is that if one party receives more than its quota, then this is justified by the fact that we could give none of its seats to another party without that party exceeding its quota.} 
Observe that all three axioms are equivalent to lower quota if restricted to unit-weight instances. 

Let us clarify the relations between these requirements. 
Clearly, WLQ-X implies WLQ-X-r, which in turn implies WLQ-1. It turns out that $\WLQo$ is incomparable to WLQ-X-r (and thus to WLQ-X).

\begin{example}
Consider two parties, votes $\prof{v} = (1,1)$ and
weights $\vec{w} = (97,1,1,1)$. For each party $p\in[2]$,
we have $\quota(p) = 50$ and $\ell^o(p) = 2$. The seat assignment $\vec{s} = (1,1,2,2)$ satisfies $\WLQo$ but not \new{WLQ-X-r since party~$1$ is sufficiently represented and} party~$2$ could also receive seat $2$ without surpassing $\quota(2) = 50$.
\end{example} 

\noindent
In the other direction, Proposition~\ref{prop:WLQ_is_NP-hard_to_find} and (the upcoming) Proposition~\ref{prop:WLQ_X_exists_m=2} give us that WLQ-X does not imply $\WLQo$\new{,} assuming $\P\neq\NP$ 
(see the Appendix for an explicit example showing this).
We follow up by investigating the relationship between $\WLQo$ and WLQ-1.  

\begin{proposition}\label{prop:WLQo_implies_WLQ-1}
    $\WLQo$ implies WLQ-1.
\end{proposition}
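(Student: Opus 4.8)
The plan is to check the property party by party and, for each party $p$, reduce to the only nontrivial case, namely $\rep{p}{\vec{s}} < \quota(p)$ (if $\rep{p}{\vec{s}} \geqslant \quota(p)$ then WLQ-1 holds for $p$ by definition). Write $\delta = \quota(p) - \rep{p}{\vec{s}} > 0$. Since $\rep{p}{\vec{s}} < \quota(p) \leqslant \omega$, party $p$ cannot occupy every seat, so there is a heaviest seat $t_0$ not assigned to $p$. What I want to establish is exactly that $w_{t_0} > \delta$: this gives $\rep{p}{\vec{s}} + w_{t_0} > \quota(p)$, so $t_0$ witnesses WLQ-1 for $p$. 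The whole argument is driven by the maximality built into $\ell^{o}(p)$ together with the inequality $\ell^{o}(p) \leqslant \rep{p}{\vec{s}}$ supplied by $\WLQo$.

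The clean case is when $p$ holds fewer than $\ell^{\#}(p)$ seats. Here I would argue by contradiction: suppose $w_{t_0} \leqslant \delta$. Consider $T = \vecPos{p}{\vec{s}} \cup \{t_0\}$, which has at most $\ell^{\#}(p)$ seats and total weight $\rep{p}{\vec{s}} + w_{t_0} \leqslant \rep{p}{\vec{s}} + \delta = \quota(p)$, so $T$ is a feasible witness in the definition of $\ell^{o}(p)$. But then $\ell^{o}(p) \geqslant \rep{p}{\vec{s}} + w_{t_0} > \rep{p}{\vec{s}} \geqslant \ell^{o}(p)$ (using $w_{t_0} \geqslant 1$), a contradiction. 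Hence $w_{t_0} > \delta$ and WLQ-1 holds for $p$. More generally, fixing an optimal $T^{*}$ with $\sum_{t \in T^{*}} w_t = \ell^{o}(p)$ and $|T^{*}| < \ell^{\#}(p)$, maximality forces $\ell^{o}(p) + w_t > \quota(p)$ for every seat $t \notin T^{*}$, hence $w_t > \quota(p) - \ell^{o}(p) \geqslant \delta$; as soon as one such $t$ can be chosen outside the seats of $p$, it satisfies $\rep{p}{\vec{s}} + w_t \geqslant \ell^{o}(p) + w_t > \quota(p)$ and again witnesses WLQ-1.

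The hard part, which I expect to be the main obstacle, is the degenerate situation where appending is blocked, i.e. where $p$ occupies at least $\ell^{\#}(p)$ seats, so that every available seat already lies inside an optimal $T^{*}$ and one must instead \emph{swap} the lightest seat of $T^{*}$ for an available one rather than add a seat. The assumption for contradiction, $w_{t_0} \leqslant \delta$, says all seats outside $p$ are light (weight at most $\delta$), so every seat of weight greater than $\delta$ is held by $p$; the danger is a configuration in which $p$ hoards all the heavy seats yet remains strictly below quota. Ruling this out quantitatively is where I expect the difficulty to concentrate, and the key leverage should be the identity $\ell^{\#}(p) = \lfloor k\cdot\nicefrac{v_p}{n}\rfloor$ together with $\quota(p) = \omega\cdot\nicefrac{v_p}{n}$: the number of seats $p$ deserves is precisely its quota measured in units of the average seat weight $\nicefrac{\omega}{k}$. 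Comparing the total weight of these heavy seats (each exceeding $\delta$, with at least $\ell^{\#}(p)$ of them) against $\quota(p)$ via this relation should force $\rep{p}{\vec{s}} \geqslant \quota(p)$, contradicting the standing assumption $\rep{p}{\vec{s}} < \quota(p)$ and thereby closing the remaining case.
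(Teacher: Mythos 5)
Your first case is sound: when $p$ holds fewer than $\ell^\#(p)$ seats, appending the heaviest unheld seat $t_0$ to $\vecPos{p}{\vec{s}}$ and invoking the maximality of $\ell^o(p)$ together with $\rep{p}{\vec{s}}\geqslant\ell^o(p)$ does yield $w_{t_0}>\delta$. The gap is the case you yourself flag as hard, $|\vecPos{p}{\vec{s}}|\geqslant\ell^\#(p)$, which you leave as a hope that pure counting via $\ell^\#(p)=\lfloor k\cdot\nicefrac{v_p}{n}\rfloor$ and the average seat weight will force $\rep{p}{\vec{s}}\geqslant\quota(p)$. That cannot work, because your sketch for this case never uses the hypothesis $\WLQo$, and without it the configuration you are trying to exclude genuinely exists. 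Take $\prof{v}=(2,3)$, $\vec{w}=(3,3,3,1)$, so $\omega=10$, $\quota(1)=4$, $\ell^\#(1)=\lfloor 4\cdot\nicefrac{2}{5}\rfloor=1$, and give party $1$ only the weight-$1$ seat. Then party $1$ holds $1\geqslant\ell^\#(1)$ seats, $\delta=3$, and every unheld seat has weight $3\leqslant\delta$, so WLQ-1 fails while $\rep{1}{\vec{s}}=1<\quota(1)$: no counting argument will produce the contradiction you are after. This assignment is ruled out only because $\ell^o(1)=3>1=\rep{1}{\vec{s}}$, i.e., because $\WLQo$ fails --- so the second case must lean on $\WLQo$ just as the first does, and your proposal does not say how.

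The paper closes both cases at once with a different use of $\ell^o(p)$: a lower bound on it rather than an appended/swapped witness set. Let $h$ be maximal with $\sum_{t\in[h]}w_t\leqslant\quota(p)$ for the $h$ heaviest seats; since these average at least $\nicefrac{\omega}{k}$, one gets $h\leqslant k\cdot\nicefrac{v_p}{n}$, hence $h\leqslant\ell^\#(p)$ and $\ell^o(p)\geqslant\sum_{t\in[h]}w_t$. Then $\WLQo$ gives $\rep{p}{\vec{s}}\geqslant\sum_{t\in[h]}w_t$, and either $p$ misses some seat $t^*\in[h]$, in which case $\rep{p}{\vec{s}}+w_{t^*}\geqslant\sum_{t\in[h]}w_t+w_{h+1}>\quota(p)$, or $p$ holds all of $[h]$ and (being below quota) misses seat $h+1$, which works for the same reason. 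If you want to salvage your structure, you need this prefix bound (or something equivalent) injected into your second case; as written, the proposal is incomplete and its proposed route for the remaining case is a dead end.
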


\noindent
Are the new axioms easier to satisfy than $\WLQo$?
First, for the two-party case, we find that not only can WLQ-X always be provided, 
but it is even possible to do so efficiently.

\begin{proposition}\label{prop:WLQ_X_exists_m=2}
        For two parties, a seat assignment providing WLQ-X always exists and can be found in polynomial time.
\end{proposition}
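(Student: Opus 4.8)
The plan is to show that the \MES\ method itself already produces a WLQ-X assignment whenever there are two parties; since \MES\ runs in $k$ rounds doing only $O(1)$ comparison work per round when $m=2$, this settles existence and the polynomial-time claim at once. So fix an instance with parties $1,2$, and recall $\quota(1)+\quota(2)=\omega$ with both quotas positive (each party has at least one voter). Let $\vec{s}$ be the \MES\ output, with $\rep{1}{\vec{s}}+\rep{2}{\vec{s}}=\omega$. Because the representations sum to the total of the quotas, at least one party---call it $p^*$---satisfies $\rep{p^*}{\vec{s}}\geqslant\quota(p^*)$, and then the other party $p'$ satisfies $\rep{p'}{\vec{s}}\leqslant\quota(p')$. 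Party $p^*$ meets WLQ-X via its first disjunct, so the whole task reduces to verifying WLQ-X for $p'$.

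For $p'$ the required inequality $\rep{p'}{\vec{s}}+w_t>\quota(p')$ rearranges, using $\rep{p'}{\vec{s}}=\omega-\rep{p^*}{\vec{s}}$ and $\quota(p')=\omega-\quota(p^*)$, to the condition $w_t>\rep{p^*}{\vec{s}}-\quota(p^*)$; that is, every seat held by $p^*$ must strictly exceed the excess of $p^*$. Since weights are non-increasing and seats are processed in index order, it suffices to check the smallest-weight seat of $p^*$, which is exactly the last seat $t^*$ that \MES\ assigned to $p^*$ (note $t^*$ exists because $\rep{p^*}{\vec{s}}\geqslant\quota(p^*)>0$). Write $w^*=w_{t^*}$; as $t^*$ is the last seat given to $p^*$, we have $\rep{p^*}{\vec{s}}=g+w^*$, where $g$ is the representation of $p^*$ immediately before round $t^*$.

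The crux is an observation about that moment just before $t^*$ is assigned. Let $D^*=\quota(p^*)-g$ and $D'=\quota(p')-g'$ be the two deficits then, with $g'$ the representation of $p'$. Their sum equals $\omega$ minus the weight already allocated, i.e.\ exactly the total weight of the not-yet-assigned seats; since $t^*$ is among these, $D^*+D'\geqslant w^*>0$. Because \MES\ assigns $t^*$ to the party of maximal deficit, $D^*\geqslant D'$, hence $D^*\geqslant (D^*+D')/2>0$. Therefore the excess of $p^*$ is $\rep{p^*}{\vec{s}}-\quota(p^*)=(g+w^*)-\quota(p^*)=w^*-D^*<w^*$, which is precisely what we needed; it extends to every seat of $p^*$ since $w^*$ is the least of their weights. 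This establishes WLQ-X for $p'$ and completes the proof.

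I expect the only genuine obstacle to be isolating the right invariant in the last step---the bookkeeping that the sum of the two deficits equals the remaining weight, so the larger deficit must be strictly positive---together with the small care needed to argue that $t^*$ is truly the minimum-weight seat of $p^*$ (which relies on processing seats in non-increasing weight order). The reduction to the single party $p'$ and the polynomial runtime are routine by comparison.
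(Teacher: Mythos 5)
Your proof is correct, but it takes a different route from the paper's. The paper gives a direct, self-contained construction: it picks the minimal index $t$ with $\sum_{i=t}^{k} w_i > \quota(1)$, hands seats $t+1,\dots,k$ to party~$1$ and the rest to party~$2$; party~$1$ then satisfies WLQ-X because the cheapest unassigned seat is seat $t$ and $\rep{1}{\vec{s}}+w_t=\sum_{i=t}^k w_i>\quota(1)$, while party~$2$ is at or above its quota. You instead show that the Greedy method itself outputs a WLQ-X assignment when $m=2$, via the deficit invariant $D^*+D'=\sum_{i\geqslant t^*}w_i\geqslant w^*$ at the round where $p^*$ receives its last (hence lightest) seat, which forces the excess $w^*-D^*<w^*$. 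Your bookkeeping is sound: the reduction to the under-represented party, the identification of $t^*$ as the minimum-weight seat of $p^*$ (valid because weights are processed in non-increasing order), and the inference $D^*\geqslant D'$ from Greedy's selection rule all check out, and the needed edge cases ($t^*$ exists, the case $\rep{p'}{\vec{s}}=\quota(p')$) are harmless. What your approach buys is a stronger statement -- a named, natural method achieves WLQ-X for two parties, not just an ad hoc assignment -- and in fact it is essentially a special case of the paper's Theorem~\ref{thm:greedy_satisfies_WLQ-X-r}: for $m=2$ the only party other than an under-represented $p$ is strictly over its quota, so WLQ-X-r coincides with WLQ-X, and your argument mirrors that theorem's proof restricted to two parties. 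The paper's construction is shorter and makes the polynomial-time claim trivially explicit, but both are valid.
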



\noindent
In other words, for two parties, we find that WLQ-X is easier to satisfy than $\WLQo$. Unfortunately, this does not extend to the case of more than two parties as a result due to \citet{Aziz2020} can be interpreted as showing that a seat assignment providing WLQ-X may not exist for election instances with three or more parties.
While determining how difficult it is to decide whether an assignment providing WLQ-X is possible for a given scenario is left for future work, a minor adjustment to the dynamic programming algorithm of Proposition~\ref{prop:WLQ_pseudopoly_to_find}
yields the following positive result under certain assumptions.
 
\begin{proposition}\label{prop:WLQ-X_pseudopoly_to_find}
    Given a constant number of parties and the weights in $\vec{w}$ being polynomial in the input size, finding a seat assignment $\vec{s}$ that provides WLQ-X can be done in polynomial time, assuming such a seat assignment exists.
\end{proposition}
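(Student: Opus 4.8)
The plan is to reuse the dynamic programming scheme of Proposition~\ref{prop:WLQ_pseudopoly_to_find} almost verbatim, changing only the information stored in each state and the final acceptance test. The starting point is the following reformulation of the axiom: for a party $p$ with $\rep{p}{\vec{s}} < \quota(p)$, the WLQ-X condition asks that $\rep{p}{\vec{s}} + w_t > \quota(p)$ hold for \emph{every} seat $t$ not assigned to $p$. Since $\vec{w}$ is listed in non-increasing order, this whole family of inequalities is equivalent to the single inequality $\rep{p}{\vec{s}} + m_p > \quota(p)$, where $m_p = \min\{w_t \mid s_t \neq p\}$ is the weight of the lightest seat not handed to $p$ (with $m_p = +\infty$ if $p$ receives all seats, in which case $\rep{p}{\vec{s}} = \omega \geqslant \quota(p)$ anyway). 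Hence, to decide WLQ-X on a completed assignment it suffices to know, for each party $p$, the pair $(\rep{p}{\vec{s}}, m_p)$.

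The conceptual crux---and the only real departure from Proposition~\ref{prop:WLQ_pseudopoly_to_find}---is noticing that the aggregate tuple $(W_1,\ldots,W_m)$ no longer carries enough information: unlike $\WLQo$, whose check $W_p \geqslant \ell^o(p)$ depends only on the $W_p$, the WLQ-X check depends on \emph{which} seats are withheld from $p$. So first I would augment the DP state, tracking tuples $(W_1,\ldots,W_m,m_1,\ldots,m_m)$, where $m_p$ additionally records the minimum weight among the seats processed so far that were not given to $p$. The transition is unchanged in spirit: to pass from $\mathcal{W}_i$ to $\mathcal{W}_{i+1}$ I consider, for each stored tuple, every choice of a party $q$ to receive seat $i+1$; this updates $W_q \mapsto W_q + w_{i+1}$ and, because seat $i+1$ is no heavier than every previously processed seat, sets $m_p \mapsto w_{i+1}$ for all $p \neq q$ while leaving $m_q$ untouched. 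A straightforward invariant (again using the non-increasing order of $\vec{w}$) shows that after all $k$ seats are placed, each final tuple records exactly the quantities $(\rep{p}{\vec{s}}, m_p)$ of some achievable assignment.

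Next I would change the acceptance test: a final tuple is accepted iff for every party $p$ we have $W_p \geqslant \quota(p)$ or $W_p + m_p > \quota(p)$; by the reformulation above this is precisely WLQ-X. Note that this test no longer requires computing $\ell^o(p)$, so the \textsc{SubsetSum} subroutine needed in Proposition~\ref{prop:WLQ_pseudopoly_to_find} is dispensed with here. If some final tuple passes, I reconstruct a witnessing assignment by the usual back-pointer bookkeeping through the tables $\mathcal{W}_i$; if none does, then (by the completeness of the enumeration) no WLQ-X assignment exists.

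Finally, for the running time I would argue that the augmentation costs only a polynomial factor. Each $W_p$ ranges over $\{0,\ldots,\omega\}$ and each $m_p$ over the at most $k+1$ values in $\{w_1,\ldots,w_k\}\cup\{+\infty\}$, so the number of states is bounded by $(\omega+1)^m (k+1)^m \leqslant (\omega+1)^{2m}$. With the number of parties $m$ constant and the weights (hence $\omega$, and $k \leqslant \omega$) polynomial in the input size, this is polynomial, and each transition as well as the final check is clearly polynomial. The main thing that still needs a careful, though routine, verification is the equivalence between the quantified WLQ-X condition and the single inequality involving $m_p$, together with the stated invariant on the maintained $m_p$; both reduce to the fact that $\vec{w}$ is ordered non-increasingly, so the last seat withheld from $p$ is the lightest one withheld from $p$.
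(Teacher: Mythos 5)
Your proof is correct and follows essentially the same route as the paper's: both augment the dynamic program of Proposition~\ref{prop:WLQ_pseudopoly_to_find} with per-party minimum-weight information in the tuples of $\mathcal{W}_i$ and replace the final $\ell^o$-based test by the WLQ-X check. The only (immaterial) difference is that you track the lightest seat \emph{withheld from} each party, whereas the paper tracks the lightest seat \emph{assigned to} each party, from which your quantity is recoverable as $\min_{p'\neq p} l_{p'}$ at the final check.
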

\begin{proof}
    If we alter the the dynamic programming algorithm from Proposition~\ref{prop:WLQ_pseudopoly_to_find} to also keep track of the smallest seat weight $l_p$ assigned to each party $p$ (alongside its sum of seat weights $W_{p}$) within the tuples in $\mathcal{W}_{i}$, then we can use the modified, final set of tuples $\mathcal{W}_{k}$ to check in polynomial time whether WLQ-X is satisfied. 
\end{proof}
    
\noindent
Following the mostly negative results regarding $\WLQo$ and WLQ-X, 
we take aim at the weaker requirement of WLQ-X-r, 
and (finally!) find a positive result.



\begin{theorem}\label{thm:greedy_satisfies_WLQ-X-r}
\new{The Greedy method satisfies WLQ-X-r.}
\end{theorem}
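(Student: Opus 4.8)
The plan is to argue directly from the greedy selection rule, fixing a final seat assignment $\vec{s}$ produced by the Greedy method. Take an arbitrary party $p$. If $\rep{p}{\vec{s}} \geqslant \quota(p)$ the first disjunct of WLQ-X-r holds and there is nothing to show, so I assume $p$ is \emph{deficient}, i.e.\ $\rep{p}{\vec{s}} < \quota(p)$. I must then show that for every seat $t$ that $\vec{s}$ assigns to some other party $p^{*}$ with $\rep{p^{*}}{\vec{s}} > \quota(p^{*})$, adding $t$ to $p$ would push it over its quota, i.e.\ $\rep{p}{\vec{s}} + w_t > \quota(p)$. Throughout I write $g_{q}(r)$ for the total weight accumulated by a party $q$ before round $r$, and I record the two monotonicity facts I will lean on: $g_{p}(r) \leqslant \rep{p}{\vec{s}}$ for every round $r$ (accumulated weight only grows), and $\rep{p^{*}}{\vec{s}} > \quota(p^{*}) > 0$ forces $p^{*}$ to hold at least one seat.

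The key idea is to anchor the argument not at the round in which $p^{*}$ received the particular seat $t$, but at the \emph{last} seat $p^{*}$ ever receives, say seat $t^{\dagger}$ (won in round $t^{\dagger}$). Two observations make $t^{\dagger}$ the right pivot. First, since the Greedy method fills seats in non-increasing order of weight, the seats held by $p^{*}$ have non-increasing weights along their rounds, so $t^{\dagger}$ carries the \emph{smallest} weight among $p^{*}$'s seats; hence $w_{t} \geqslant w_{t^{\dagger}}$ for every seat $t$ assigned to $p^{*}$. Second, because $t^{\dagger}$ is the last seat $p^{*}$ obtains, its representation is already final immediately afterwards, so $g_{p^{*}}(t^{\dagger}) + w_{t^{\dagger}} = \rep{p^{*}}{\vec{s}} > \quota(p^{*})$, which rearranges to $\quota(p^{*}) - g_{p^{*}}(t^{\dagger}) < w_{t^{\dagger}}$.

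It then remains to chain the greedy inequality. At round $t^{\dagger}$ the Greedy method selected $p^{*}$ over all competitors (full supply guarantees $p$ is always a candidate), so $\quota(p^{*}) - g_{p^{*}}(t^{\dagger}) \geqslant \quota(p) - g_{p}(t^{\dagger}) \geqslant \quota(p) - \rep{p}{\vec{s}}$, using $g_{p}(t^{\dagger}) \leqslant \rep{p}{\vec{s}}$. Combining this with the bound from the previous paragraph yields $\quota(p) - \rep{p}{\vec{s}} \leqslant \quota(p^{*}) - g_{p^{*}}(t^{\dagger}) < w_{t^{\dagger}} \leqslant w_{t}$, i.e.\ $\rep{p}{\vec{s}} + w_{t} > \quota(p)$, exactly as required for every seat $t$ of the over-quota party $p^{*}$.

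I expect the main obstacle to be precisely the choice of pivot round. The naive attempt applies the greedy inequality at the round in which $p^{*}$ acquired $t$, but at that moment $p^{*}$ need not yet exceed its quota, so the crucial bound $\quota(p^{*}) - g_{p^{*}}(\cdot) < w$ can fail. The resolution---and the one place where the non-increasing weight ordering of the Greedy method is genuinely used---is to pivot on $p^{*}$'s final seat $t^{\dagger}$, whose weight lower-bounds that of every other seat of $p^{*}$ and immediately after which $p^{*}$'s over-quota status is guaranteed.
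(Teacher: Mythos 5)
Your proof is correct and follows essentially the same route as the paper's: apply the greedy selection inequality at a well-chosen pivot round of the over-quota party $p^{*}$, observe that the seat weight won in that round strictly exceeds $p^{*}$'s remaining quota deficit, and use the non-increasing weight order to extend the bound to every seat held by $p^{*}$. Your choice of pivot---the \emph{last} seat $p^{*}$ receives, rather than the paper's first round in which $p^{*}$ crosses its quota---is a minor but welcome refinement, since it makes the minimality of $w_{t^{\dagger}}$ among $p^{*}$'s seat weights immediate, whereas the paper's variant implicitly relies on the separate fact (established only in its later WUQ-X argument) that a party receives no further seats once it has exceeded its quota.
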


\begin{proof}
Suppose \new{WLQ-X-r} is violated by some seat assignment $\vec{s}$ returned by the Greedy method. Let party~$p_{x}$ be the party that witnesses it, i.e., $\rep{p_{x}}{\vec{s}} \,\new{<}\, \rep{p_{x}}{\vec{s}} + w_{t} \,\new{\leqslant}\, \quota(p_{x})$ 
\new{for some seat $t \in \{t' \in \vecPos{p^{*}}{\vec{s}}\mid p^{*}\in[m]\setminus\{p\},\rep{p^{*}}{\vec{s}} > \quota(p^{*})\}$.}
As party~$p_{x}$ has less than $\quota(p_{x})$ in representation, there must be a party~$p_{y}$ where $\rep{p_{y}}{\vec{s}} > \quota(p_{y})$. Let $h$ be the round after which party~$p_{y}$ has more than $\quota(p_{y})$ in representation (so party~$p_{y}$ was assigned seat $h$). By choice of round $h$, we have $g_{p_{y}}(h) + w_{h} > \quota(p_{y})$, so it holds that $w_{h} > \quota(p_{y}) - g_{p_{y}}(h)$. Thus, we have that $w_{h} > \quota(p_{y}) - g_{p_{y}}(h)$, and since  party~$p_{x}$ was not assigned seat $h$, we know that $\quota(p_{y}) - g_{p_{y}}(h) \geqslant \quota(p_{x}) - g_{p_{x}}(h)$. It then follows that $\quota(p_{x}) < g_{p_{x}}(h) + w_{h} \leqslant \rep{p_{x}}{\vec{s}}+ w_{h}$. So seat $h$ is enough for party~$p_{x}$ to reach $\quota(p_{x})$, ands the same holds for seats assigned to party~$p_{y}$ in prior rounds (given that seats are assigned in non-increasing order).
\end{proof}

\noindent
Recall that in standard apportionment, LRM is known to satisfy LQ~\cite{BalinskiYoung1982}.
In this light, Theorem~\ref{thm:greedy_satisfies_WLQ-X-r} further justifies 
the Greedy method as a weighted proxy of LRM. 
Recall furthermore that in the standard setting LQ is also satisfied by D'Hondt; 
we now find that D'Hondt's weighted equivalent, $\DHondt$, satisfies WLQ-X-r. 

\begin{theorem}\label{thm:dhondt_satisfies_WLQ-X-r}
    The $\DHondt$ method satisfies \new{WLQ-X-r}.
\end{theorem}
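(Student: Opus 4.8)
The plan is to mirror the structure of the proof of Theorem~\ref{thm:greedy_satisfies_WLQ-X-r}, arguing by contradiction, but with the key adaptation that the $\DHondt$ selection rule compares the ratios $v_p/(g_p(t)+w_t)$ rather than the additive quantities $\quota(p)-g_p(t)$ used by the Greedy method. So I would suppose $\vec{s}$ is an assignment returned by $\DHondt$ that violates WLQ-X-r, witnessed by an under-represented party $p_x$ with $\rep{p_x}{\vec{s}} < \quota(p_x)$ together with a seat $t$ held by an over-represented party $p_y := s_t$ (so $\rep{p_y}{\vec{s}} > \quota(p_y)$) for which $\rep{p_x}{\vec{s}} + w_t \leqslant \quota(p_x)$, and then aim to derive a contradiction.

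The crucial device for turning ratio comparisons into quota statements is the identity $v_p/\quota(p) = n/\omega$, which is immediate from $\quota(p) = \omega\cdot\nicefrac{v_p}{n}$. I would isolate the following observation as the engine of the proof: whenever $\DHondt$ assigns a seat in some round $r$ to a party $p \neq p_x$ whose representation right after that round already exceeds its quota, i.e.\ $g_p(r)+w_r > \quota(p)$, then $g_{p_x}(r)+w_r > \quota(p_x)$ as well. Indeed, $g_p(r)+w_r > \quota(p)$ gives $v_p/(g_p(r)+w_r) < n/\omega$; since $\DHondt$ preferred $p$ to $p_x$ in round $r$ we have $v_{p_x}/(g_{p_x}(r)+w_r) \leqslant v_p/(g_p(r)+w_r) < n/\omega$, and rearranging (all denominators being positive, as $v_p\geqslant 1$ and $w_r\geqslant 1$) yields $g_{p_x}(r)+w_r > \quota(p_x)$. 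Since $g_{p_x}(r) \leqslant \rep{p_x}{\vec{s}}$, this even gives $\rep{p_x}{\vec{s}} + w_r > \quota(p_x)$.

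With this observation in hand, I would let $h$ be the round after which $p_y$'s representation first surpasses $\quota(p_y)$, so that $g_{p_y}(h) \leqslant \quota(p_y) < g_{p_y}(h)+w_h$ and seat $h$ was given to $p_y$; applying the observation at round $h$ yields $\rep{p_x}{\vec{s}} + w_h > \quota(p_x)$. The final step splits on the position of the witness seat $t$ relative to $h$. If $t \leqslant h$, then $w_t \geqslant w_h$ by the non-increasing ordering of the weights, so $\rep{p_x}{\vec{s}} + w_t \geqslant \rep{p_x}{\vec{s}} + w_h > \quota(p_x)$, contradicting the assumption. If instead $t > h$, then by round $t$ party $p_y$ has already exceeded its quota, so $g_{p_y}(t)+w_t > \quota(p_y)$, and applying the observation directly at round $t$ gives $\rep{p_x}{\vec{s}} + w_t > \quota(p_x)$, again a contradiction.

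I expect the main obstacle to be precisely this case distinction on the location of seat $t$: because the rule is ratio-based, the single critical round $h$ only controls seats that are at least as heavy as $w_h$, so the lighter seats that $p_y$ may have collected \emph{after} going over quota cannot be handled by the round-$h$ inequality and must be treated through their own round of assignment (which is exactly where I use the stronger fact that at round $t>h$ the party $p_y$ is already over quota). The remaining work is the careful bookkeeping of inequality directions when passing between the ratio form and the additive quota form, together with the positivity of denominators; the conceptual content is entirely carried by the observation above.
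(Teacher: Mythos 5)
Your proof is correct and takes essentially the same route as the paper's: a contradiction argument driven by the $\DHondt$ ratio comparison at the round where the over-represented party crosses its quota, the identity $v_p/\quota(p) = n/\omega$ to translate ratios into quota statements, and the non-increasing weight order to cover the remaining seats. The only (cosmetic) difference is that you locate the critical round $h$ directly from the over-represented holder of the witness seat, which the definition of WLQ-X-r hands you, whereas the paper first establishes the existence of such a round via a counting argument over the quotas; your case split on $t \leqslant h$ versus $t > h$ plays exactly the role of the paper's closing paragraph.
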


\begin{proof}
        For a seat assignment $\vec{s}$ returned by $\DHondt$, for the sake of contradiction, assume that there is a party $p_{x}$ such that \new{there exists some $t \in \{t' \in \vecPos{p^{*}}{\vec{s}}\mid p^{*}\in[m]\setminus\{p\},\rep{p^{*}}{\vec{s}} > \quota(p^{*})\}$ such that $\rep{p_{x}}{\vec{s}} < \rep{p_{x}}{\vec{s}} + w_{t} \leqslant \quota(p_{x})$.}
        
        Thus, we know that $\nicefrac{v_{p_{x}}}{(g_{p_{x}}(k)+w_{t})} \,\new{\geqslant}\, \nicefrac{v_{p_{x}}}{\quota(p_{x})}$ for some \new{$t \in \{t' \in \vecPos{p^{*}}{\vec{s}}\mid p^{*}\in[m]\setminus\{p\},\rep{p^{*}}{\vec{s}} > \quota(p^{*})\}$}, where $g_{p_{x}}(k)$ is the total weight assigned to party $p_{x}$ at $\DHondt$'s conclusion. This gives us the following:       
        \begin{equation}\label{eq:w-D'Hondt_proof_1}
            \frac{v_{p_{x}}}{g_{p_{x}}(k)+w_{t}} \,\new{\geqslant}\, \frac{v_{p_{x}}}{\quota(p_{x})} = \frac{v_{p_{x}}}{\omega\cdot\nicefrac{v_{p_x}}{n}} = \frac{n}{\omega}
        \end{equation}
        During $\DHondt$, there must be some round $h$ where some party $p_{y}\neq p_{x}\in [m]$ is assigned weight $w_{h}$ such that $\nicefrac{n}{\omega} \,\new{>}\, \nicefrac{v_{p_{y}}}{(g_{p_{y}}(h)+w_{h})}$. Assume otherwise and that for every party $p\in [m]\new{\setminus\{p_{x}\}}$ it holds that $\nicefrac{v_{p}}{g_{p}(k)} \,\new{\geqslant}\, \nicefrac{n}{\omega}$ after $\DHondt$'s $k$ rounds. Then we have that $\omega\cdot\nicefrac{v_{p}}{n} \,\new{\geqslant}\, g_{p}(k)$ for all $p\in [m]\new{\setminus\{p_{x}\}}$. Summing over all parties \new{with $\omega\cdot\nicefrac{v_{p_{x}}}{n} > g_{p_{x}}(k)$ for party~$p_{x}$}, we get $\sum_{p\in [m]} \omega\cdot\nicefrac{v_{p}}{n} = \omega > \new{g_{p_{x}}(k) + }\sum_{p\in [m]\new{\setminus\{p_{x}\}}} g_{p}(k)$, so $\DHondt$ did not assign all of the weight, contradicting its definition. So, there must exist some round $h$ where for some party $p_{y}$, we have:
        \begin{equation}\label{eq:w-D'Hondt_proof_2}
           \frac{n}{\omega} \,\new{>}\, \frac{v_{p_{y}}}{g_{p_{y}}(h)+w_{h}}
        \end{equation}
        Since weight $w_{h}$ was assigned to party $p_{y}$ in round $h$, and not to party $p_{x}$, we have that $\nicefrac{v_{p_{y}}}{(g_{p_{y}}(h)+w_{h})} \geqslant \nicefrac{v_{p_{x}}}{(g_{p_{x}}(h)+w_{h})}$, where $h\in \{t' \in \vecPos{p^{*}}{\vec{s}}\mid p^{*}\in[m]\setminus\{p\},\rep{p^{*}}{\vec{s}} > \quota(p^{*})\}$. And also considering the fact that $g_{p_{x}}(h)\leqslant g_{p_{x}}(k)$, it follows that: 
        \begin{equation}\label{eq:w-D'Hondt_proof_3}
            \frac{v_{p_{y}}}{g_{p_{y}}(h)+w_{h}} \geqslant \frac{v_{p_{x}}}{g_{p_{x}}(h)+w_{h}} \geqslant \frac{v_{p_{x}}}{g_{p_{x}}(k)+w_{h}}
        \end{equation}
        Putting equations (\ref{eq:w-D'Hondt_proof_1}), (\ref{eq:w-D'Hondt_proof_2}), and (\ref{eq:w-D'Hondt_proof_3}) together, it follows that $\nicefrac{n}{\omega} \,\new{>}\, \nicefrac{v_{p_{y}}}{(g_{p_{y}}(h)+w_{h})} \,\new{\geqslant}\, \nicefrac{n}{\omega}$.
        %
        This is a contradiction, so no such party $p_{x}$ can exist. \new{Note that we considered a seat weight $w_{h}$ assigned to some party $p_{y}$ in round $h$, such that $p_{y}$ surpasses its quota. And such a weight $w_{h}$ is sufficient in aiding party~$p_{x}$ in reaching $\quota(p_{x})$. This holds for all seats assigned to party~$p_{y}$ before round $h$ (as such seats $h^{*}$ have weight $w_{h^{*}}\geqslant w_{h}$), and also those seats assigned to party~$p_{y}$ after round $h$ (as such seats $h^{*}$ are only assigned to party~$p_{y}$, and not some party~$p_{x}$ below its quota $\quota(p_{x})$ in that round, if the weight $w_{h^{*}}$ would lead to party~$p_{x}$ reaching said quota).}         
\end{proof}

\noindent
This improves on a result of \citet{ChakrabortySS21} stating that D'Hondt satisfies an axiom called WPROP1 (see their Theorem~4.9),\footnote{WPROP1 is similar to WLQ-1 but defined with a weak inequality in the condition on the existence of a seat of sufficient weight. This axiom has been frequently studied in settings more general than ours \cite{Aziz2020,LiLW2022,HVN2024,AzizLMWZ2024}, along with generalisations such as that of \citet{ChakrabortySS2024revisit}.} which is weaker than WLQ-X-r. The importance of our findings is strengthened by observing that the `up to any' properties are, in many scenarios, much stronger than the equivalent `up to one' properties, especially if the values of objects vary a lot. For instance, consider how our WSAMs would handle the allocation of non-liquid assets in a bankruptcy. Using the monetary value of the assets as their weights, we might have a few very valuable assets (e.g., a house or other property), together with assets of much lower value (e.g., furniture). In such a case `up to one' properties can become essentially meaningless, while `up to any' properties are still meaningful. 
Crucially, our stronger result does not only stem from our restricted setting but also from our use of weighted $\DHondt$, as standard D'Hondt---used by \citet{ChakrabortySS21}---does not satisfy WLQ-X-r in our setting. 
The next example shows this.

\begin{namedexample}[Standard D'Hondt fails WLQ-X-r]\label{exm:standardDHondt_fails_WLQ-x-r}
Consider two parties with votes $\prof{v} = (10,2)$ and the weight vector $\vec{w} = (10,1,1)$. Standard D'Hondt assigns all three seats to party~$1$: in the three rounds party~$1$ has the ratios $10$, $5$, and $2.5$, respectively, versus party~$2$'s ratio of $2$ in all three rounds. Thus, party~$2$ has representation of $0$ from the resulting seat assignment and no weight-$1$ seats are enough to add so that party~$2$ exceeds its quota of $\quota(2) = 2$. However, observe that the seat assignment determined by standard D'Hondt provides WLQ-1, while our WSAM $\DHondt$ returns the seat assignment $\vec{s} = (1,2,2)$. Also, this seat assignment $\vec{s}$ not only provides WLQ-X-r, but it is intuitively a much fairer outcome. 
\end{namedexample}

\noindent
Example \ref{exm:standardDHondt_fails_WLQ-x-r} illustrates the usefulness of objective weights for seats,
and motivates our use of WSAMs to satisfy the stronger `up-to-any' properties.
%
%
On the other hand, $\Adams$ fails even WLQ-1 (an example can be found in the Appendix).
This is unsurprising, as it is known to violate LQ, 
which, as mentioned above, is equivalent to WLQ-1 in the standard unit-weight case \cite{BalinskiYoung1982}.


\section{Upper Quota}\label{sec:upper-quota}

\new{Parties should get at least as many seats as they deserve (lower quota), but also not more than appropriate:
the latter bound is captured by an \emph{upper quota} (UQ) property.}
In standard apportionment,
UQ states that a party~$p$ amassing $v_{p}$ of the $n$ votes should receive
\emph{at most} $\lceil k\cdot\nicefrac{v_{p}}{n} \rceil$ of the $k$ seats \cite{BalinskiYoung1982}.
As with lower quota, 
there is no hope of
satisfying the \new{na\"{\i}ve} weighted upper quota notion defined by
$\lceil \omega\cdot\nicefrac{v_{p}}{n} \rceil$. 
We \new{can then} define an obtainable upper quota 
\new{as for}
the obtainable weighted lower quota $\ell^o(p)$, \new{but}
the results for the corresponding axiom are very similar to the results for $\WLQo$
and, as we show in the Appendix, they are similarly negative.
%
%
We thus move on to `up-to-one/any' notions, which will allowe us to define satisfiable LQ axioms.

\begin{definition}[WUQ up to one seat, WUQ-1]
A seat assignment $\vec{s}$ provides WUQ-1 if, for every party $p$, either $\rep{p}{\vec{s}} \leqslant \quota(p)$ or there exists some seat $t\in \vecPos{p}{\vec{s}}$ such that $\rep{p}{\vec{s}} - w_{t} < \quota(p)$.
\end{definition}

\begin{definition}[WUQ up to any seat, WUQ-X]
A seat assignment $\vec{s}$ provides WUQ-X if, for every party $p$, either $\rep{p}{\vec{s}} \leqslant \quota(p)$ or for every seat $t\in \vecPos{p}{\vec{s}}$, it holds that $\rep{p}{\vec{s}} - w_{t} <  \quota(p)$.
\end{definition}

\noindent
WUQ-X states that, for every party $p$, disregarding any seat it received would take it below $q(p)$, while for WUQ-1 there need only exist one such seat assigned to party~$p$ to take it below $q(p)$. With unit weights, both axioms reduce to UQ. Observe that there is no natural way of defining a counterpart to WLQ-X-r
. Finally, note that WUQ-X 
implies WUQ-1.

%
%
%

Next, we ask whether upper-quota axioms can be satisfied. A natural candidate
is $\Adams$, as it is known to satisfy UQ
for unit weights~\cite{BalinskiYoung1982}. 
Indeed, $\Adams$ even satisfies the stronger notion of WUQ-X\new{,} in stark contrast to WLQ-X, which is not satisfiable in general. 
We show that $\Adams$ satisfies WUQ-X by showing that it satisfies the following \emph{envy-freeness} axiom 
\cite{springerHY24}.

\begin{newenv}
    \begin{definition}[Weighted envy-freeness up to any seat, WEFX]
    A seat assignment $\vec{s}$ provides WEFX if for any two parties $p_{x},p_{y}$,
    it holds for every seat $t\in \vecPos{p_{y}}{\vec{s}}$ that $\nicefrac{\rep{p_{x}}{\vec{s}}}{v_{p_{x}}} \geqslant \nicefrac{(\rep{p_{y}}{\vec{s}} - w_{t})}{v_{p_{y}}}$.
\end{definition}
\end{newenv}

\noindent
WEFX ensures that no party prefers the representation afforded to another party. Conceptually, this is similar to upper quota, stating that no party is represented more than it deserves.
The next result provides a formal connection between envy-freeness and upper quota.

\begin{newenv}
    \begin{proposition}\label{Prop:WEFX-WUQ-X}
        WEFX implies WUQ-X.
    \end{proposition}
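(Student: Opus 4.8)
The plan is to establish the implication directly: assuming $\vec{s}$ satisfies WEFX, I would show that the second (stronger) disjunct of WUQ-X—that deleting any single seat of a party drops its representation strictly below its quota—in fact holds for \emph{every} party, which makes the disjunction in WUQ-X trivially true. So the target reduces to proving, for every party $p_y$ and every seat $t \in \vecPos{p_y}{\vec{s}}$, that $\rep{p_y}{\vec{s}} - w_t < \quota(p_y)$.

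First I would fix an arbitrary party $p_y$ and a seat $t \in \vecPos{p_y}{\vec{s}}$ (if $p_y$ holds no seat, then $\rep{p_y}{\vec{s}} = 0 \leqslant \quota(p_y)$ and the condition is vacuous). Invoking WEFX with each party $p_x \in [m]$ in the role of the non-envious party gives $\nicefrac{\rep{p_x}{\vec{s}}}{v_{p_x}} \geqslant \nicefrac{(\rep{p_y}{\vec{s}} - w_t)}{v_{p_y}}$; since every $v_{p_x} \geqslant 1$, I can clear denominators to get $\rep{p_x}{\vec{s}} \geqslant v_{p_x}\cdot \nicefrac{(\rep{p_y}{\vec{s}} - w_t)}{v_{p_y}}$.

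Next I would sum this over all $p_x \in [m]$. On the left, $\sum_{p_x} \rep{p_x}{\vec{s}} = \omega$, because every one of the $k$ seats is assigned to exactly one party and so the representations partition the total weight. On the right, $\nicefrac{(\rep{p_y}{\vec{s}} - w_t)}{v_{p_y}}$ factors out of the sum, leaving $\nicefrac{(\rep{p_y}{\vec{s}} - w_t)}{v_{p_y}}\cdot\sum_{p_x} v_{p_x} = \nicefrac{(\rep{p_y}{\vec{s}} - w_t)\cdot n}{v_{p_y}}$. This yields $\rep{p_y}{\vec{s}} - w_t \leqslant \omega\cdot\nicefrac{v_{p_y}}{n} = \quota(p_y)$.

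The step I expect to require the most care is upgrading this weak inequality to the strict one demanded by WUQ-X. The key observation is that the summand indexed by $p_x = p_y$ is itself strict: there $\nicefrac{\rep{p_y}{\vec{s}}}{v_{p_y}} > \nicefrac{(\rep{p_y}{\vec{s}} - w_t)}{v_{p_y}}$ precisely because $w_t \geqslant 1 > 0$ (weights lie in $\mathbb{N}_{\geqslant 1}$). Summing a collection of weak inequalities one of which is strict produces a strict inequality, so $\omega > \nicefrac{(\rep{p_y}{\vec{s}} - w_t)\cdot n}{v_{p_y}}$ and hence $\rep{p_y}{\vec{s}} - w_t < \quota(p_y)$. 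As $p_y$ and $t$ were arbitrary, WUQ-X holds. (If one reads WEFX as ranging only over distinct parties, the self-comparison term can simply be added by hand, since it holds unconditionally; the positivity of $w_t$ is what the whole argument hinges on.)
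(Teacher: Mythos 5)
Your proof is correct and rests on the same core fact as the paper's: summing the WEFX inequalities over all parties and using that $\sum_{p}\rep{p}{\vec{s}} = \omega = \sum_{p}\quota(p)$. The paper phrases this as a contradiction (a WUQ-X violation would force every party to meet or exceed its quota, with one party strictly exceeding it, contradicting the total-weight identity), whereas you argue directly and obtain strictness from the self-comparison term $w_t \geqslant 1$; this is only a difference in presentation, not in substance.
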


\end{newenv}

\noindent
On the other hand, it is easys to see that 
the following axiom, WEF1, does not imply WUQ-X (a counterexample is in the Appendix).\footnote{One can also show that the weakening of WEF1 known as WWEF1 \cite{chakrabortyISZ21envy} does not imply WUQ-1, see the Appendix for an example.}

\begin{definition}[Weighted envy-freeness up to one seat, WEF1]
    A seat assignment $\vec{s}$ provides WEF1 if for any two parties $p_{x},p_{y}$,
    there exists some seat $t\in \vecPos{p_{y}}{\vec{s}}$ such that $\nicefrac{\rep{p_{x}}{\vec{s}}}{v_{p_{x}}} \geqslant \nicefrac{(\rep{p_{y}}{\vec{s}} - w_{t})}{v_{p_{y}}}$.
\end{definition}

\noindent
That WEF1 implies WUQ-1 follows from similar reasoning to that showing that WEFX implies WUQ-X.  

\noindent

\begin{newenv}
    \begin{theorem}
        The $\Adams$ method satisfies WEFX.
    \end{theorem}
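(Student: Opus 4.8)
The plan is to reduce WEFX for $\Adams$ to a single, most-demanding comparison and then read off the required inequality from the selection rule that defines divisor methods. Fix two parties $p_x$ and $p_y$. Since the right-hand side of the WEFX condition $\nicefrac{\rep{p_x}{\vec{s}}}{v_{p_x}} \geqslant \nicefrac{(\rep{p_y}{\vec{s}} - w_t)}{v_{p_y}}$ is largest exactly when $w_t$ is smallest, it suffices to verify it for the lightest seat in $p_y$'s bundle. Because $\Adams$ fills seats in non-increasing weight order (round $t$ fills seat $t$, and $w_1 \geqslant \cdots \geqslant w_k$), this lightest seat is the one $p_y$ wins in the last round $t^*$ in which it receives anything; hence $g_{p_y}(t^*) = \rep{p_y}{\vec{s}} - w_{t^*}$, since all of $p_y$'s other seats are won strictly before round $t^*$.

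Next I would dispose of the degenerate case. If $p_y$ holds at most one seat, then deleting it leaves representation $0$, so the inequality holds trivially (the right-hand side is $0$; the empty-bundle case is vacuous). Assume then that $p_y$ holds at least two seats. Here I invoke the $\infty$-priority behaviour of $\Adams$: a party with $g_p(t)=0$ has ratio $\infty$ and is therefore always selected ahead of any party with a finite ratio, so no party can win a second seat while some party still has none. Consequently, by the time $p_y$ wins its second-or-later seat at round $t^*$, every party---$p_x$ in particular---has already been assigned a seat in an earlier round, giving $g_{p_x}(t^*) \geqslant 1 > 0$.

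The heart of the argument is the selection inequality at round $t^*$. Because $\Adams$ awards seat $t^*$ to $p_y$ rather than $p_x$, and both ratios are now finite and positive, we have $\nicefrac{v_{p_y}}{g_{p_y}(t^*)} \geqslant \nicefrac{v_{p_x}}{g_{p_x}(t^*)}$, which rearranges to $\nicefrac{g_{p_x}(t^*)}{v_{p_x}} \geqslant \nicefrac{g_{p_y}(t^*)}{v_{p_y}}$. Combining this with the monotone bound $g_{p_x}(t^*) \leqslant \rep{p_x}{\vec{s}}$ (party $p_x$ only gains weight in later rounds) and the identity $g_{p_y}(t^*) = \rep{p_y}{\vec{s}} - w_{t^*}$ yields
\[
\frac{\rep{p_x}{\vec{s}}}{v_{p_x}} \;\geqslant\; \frac{g_{p_x}(t^*)}{v_{p_x}} \;\geqslant\; \frac{g_{p_y}(t^*)}{v_{p_y}} \;=\; \frac{\rep{p_y}{\vec{s}} - w_{t^*}}{v_{p_y}}.
\]
Since $w_{t^*} \leqslant w_t$ for every $t \in \vecPos{p_y}{\vec{s}}$, replacing $w_{t^*}$ by any heavier $w_t$ only shrinks the right-hand side, so WEFX holds for the pair $(p_x,p_y)$; as the pair was arbitrary, the claim follows.

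The step I expect to require the most care is the bookkeeping around the $\infty$-ratios: I must argue cleanly that $p_y$'s lightest seat coincides with its last-won seat, that $g_{p_y}(t^*) = \rep{p_y}{\vec{s}} - w_{t^*}$, and---most importantly---that $p_x$ has \emph{strictly positive} accumulated weight at round $t^*$, so that the ratio comparison is between two finite quantities and the cross-multiplication is valid. All three facts hinge on the non-increasing seat order and the priority given to parties with no seats, so I would state these structural observations explicitly before carrying out the main computation.
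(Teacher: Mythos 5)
Your proof is correct and follows essentially the same route as the paper's: identify the last (hence lightest) seat won by the envied party, apply the $\Adams$ selection inequality at that round, and use the non-increasing weight order to extend the bound to all heavier seats. The only difference is that you treat the degenerate cases (single-seat bundles and the $\infty$-ratio priority ensuring $g_{p_x}(t^*)>0$) explicitly, which the paper's proof leaves implicit.
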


    \begin{proof}
        Suppose there are two parties $p_{x},p_{y}$ with $\nicefrac{\rep{p_{y}}{\vec{s}}}{v_{p_{y}}} < \nicefrac{\rep{p_{x}}{\vec{s}}}{v_{p_{x}}}$, i.e., party~$p_{y}$ envies party~$p_{x}$. Now, consider the last seat $t$ that was assigned to party~$p_{x}$ by $\Adams$ in round $h$. Since this seat was assigned to party~$p_{x}$, we have that
        \(
            \nicefrac{v_{p_{x}}}{g_{p_{x}}(h)} \geqslant \nicefrac{v_{p_{y}}}{g_{p_{y}}(h)}.
        \)
        And as seat $t$ was the last seat assigned to party~$p_{x}$, we get:
        \(
            \nicefrac{\rep{p_{y}}{\vec{s}}}{v_{p_{y}}} \geqslant 
            \nicefrac{g_{p_{y}}(h)}{v_{p_{y}}} \geqslant 
            \nicefrac{g_{p_{x}}(h)}{v_{p_{x}}} = \nicefrac{(\rep{p_{x}}{\vec{s}} - w_{t})}{v_{p_{x}}}.
        \)
        So, removing seat $t$ from party~$p_{x}$ means party~$p_{y}$ no longer envies $p_{x}$, and as all seats assigned to $p_{x}$ prior to seat $t$ have weight at least as large as $w_t$, removing any of these seats is sufficient to remove $p_{y}$'s envy.
    \end{proof}
\end{newenv}

\noindent
It is known that WEFX can be achieved in our setting due to this being the case for the setting of weighted fair division under certain restrictions imposed for that model \cite{springerHY24}. Still, WEFX being achievable with the use of a simple method such as our weighted $\Adams$ is an important new insight that strengthens the practical relevance of this positive finding. This improves on results showing that Adams can be used to achieve WEF1 \cite{chakrabortyISZ21envy}. Indeed, finding a rule that satisfies WEFX in the more general setting of \citet{ChakrabortySS21} has recently been shown to be impossible by \citet{springerHY24}, while showing the existence of EFX allocations in the standard fair division setting remains one of the major open questions in fair division~\cite{AmanatidisBFV22}.

Our next result is a direct corollary of Proposition~\ref{Prop:WEFX-WUQ-X}.

\begin{newenv}
\begin{corollary}
\new{The} \notnew{$\Adams$} \new{method} \notnew{satisfies WUQ-X.}
\end{corollary}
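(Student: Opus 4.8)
The plan is to derive this result immediately from the two facts already established just above in the excerpt: the theorem stating that $\Adams$ satisfies WEFX, and Proposition~\ref{Prop:WEFX-WUQ-X} stating that WEFX implies WUQ-X. The corollary is simply the composition of these two implications, so the proof should be a single short sentence rather than a fresh argument from scratch.

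Concretely, I would proceed as follows. First, I would invoke the preceding theorem to note that for every election instance $(\prof{v},\vec{w})$, every seat assignment $\vec{s} \in \Adams(\prof{v},\vec{w})$ provides WEFX. Then I would apply Proposition~\ref{Prop:WEFX-WUQ-X} pointwise to each such $\vec{s}$: since WEFX holds for $\vec{s}$ and WEFX implies WUQ-X, we conclude that $\vec{s}$ provides WUQ-X. Because this holds for every instance and every winning seat assignment, by the convention fixed earlier (a WSAM satisfies a property $\mathcal{P}$ if every $\vec{s}\in F(\prof{v},\vec{w})$ satisfies $\mathcal{P}$ on every instance), the method $\Adams$ satisfies WUQ-X.

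There is essentially no obstacle here, since both ingredients are already proven; the only thing to be careful about is the quantifier bookkeeping. The property-satisfaction convention is stated for WSAMs at the instance-and-assignment level, and both the WEFX theorem and Proposition~\ref{Prop:WEFX-WUQ-X} are compatible with it (the proposition is an implication between properties of a single fixed seat assignment, and the theorem asserts WEFX for every winning assignment), so the two compose without any gap. Thus the expected "proof" is just:
\begin{proof}
This follows immediately by combining Proposition~\ref{Prop:WEFX-WUQ-X} with the preceding theorem: since $\Adams$ satisfies WEFX and WEFX implies WUQ-X, the method $\Adams$ satisfies WUQ-X.
\end{proof}

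Since there is no real mathematical content beyond this chaining, I would not attempt to re-prove WUQ-X directly from the divisor-method mechanics of $\Adams$; doing so would duplicate the work already packaged in the WEFX theorem and Proposition~\ref{Prop:WEFX-WUQ-X}, and the whole point of stating this as a corollary is to signal that it is an immediate consequence. The main (trivial) judgement call is whether to spell out the instance-level quantifiers explicitly or leave them implicit given the stated convention; for a corollary a one-line proof leaving them implicit is appropriate.
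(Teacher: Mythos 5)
Your proposal is correct and matches the paper exactly: the paper introduces this result with "Our next result is a direct corollary of Proposition~\ref{Prop:WEFX-WUQ-X}," i.e., it is obtained by chaining the theorem that $\Adams$ satisfies WEFX with the proposition that WEFX implies WUQ-X, precisely as you describe. Your remarks on quantifier bookkeeping are consistent with the paper's stated convention for when a WSAM satisfies a property.
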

\end{newenv}

\noindent
In view of the fact that $\Adams$ does not satisfy WLQ-1, \new{can} envy-freeness and lower quota be satisfied at the same time\new{?}
\new{This is not possible, as WEF1 and WLQ-1 are incompatible.} \citet{chakrabortyISZ21envy} showed this for the case where voters may value the seats differently, but their work also shows that this negative result holds in our more restricted setting with identical valuations.  

\begin{newenv}
\begin{proposition}
WEF1 and WLQ-1 are incompatible.
\end{proposition}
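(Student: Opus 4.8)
The plan is to prove incompatibility the only way an impossibility of this shape can be proved: by exhibiting a single election instance on which \emph{no} seat assignment simultaneously satisfies WEF1 and WLQ-1, and then verifying this through an exhaustive case analysis over all assignments (finitely many, up to the symmetry of equally-voted parties). The instance I would use is three parties with $\prof{v} = (5,1,1)$ (so $n = 7$) and three identical seats $\vec{w} = (3,3,3)$ (so $\omega = 9$). The design idea is that lumpy, indivisible, equal-weight seats combined with a skewed vote distribution rule out any ``balanced'' assignment: the dominant party's quota $\quota(1) = \nicefrac{45}{7} \approx 6.43$ sits strictly between the representation it gets from one seat ($3$) and from two seats ($6$), while the small parties have tiny quotas $\quota(2) = \quota(3) = \nicefrac{9}{7} \approx 1.29$.

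Because all seats are identical, an assignment is fully described (up to relabelling the two symmetric small parties) by the number $n_1 \in \{0,1,2,3\}$ of seats party~$1$ receives, giving $\rep{1}{\vec{s}} = 3 n_1$. I would then split the argument into two blocks. First, WLQ-1 forces $n_1 \geqslant 2$: if $n_1 \leqslant 1$ then $\rep{1}{\vec{s}} \leqslant 3 < \quota(1)$, and even adding the largest available seat yields $3 + 3 = 6 \not> \quota(1)$, so WLQ-1 is violated. Second, WEF1 forbids $n_1 \geqslant 2$: whenever party~$1$ holds at least two seats, at least one small party $p$ is left with no seat, and since removing any (weight-$3$) seat from party~$1$ leaves it with representation at least $3$, we obtain $\nicefrac{\rep{p}{\vec{s}}}{v_p} = 0 < \nicefrac{(\rep{1}{\vec{s}} - 3)}{5}$ for every choice of seat removed, so $p$ envies party~$1$ beyond one seat and WEF1 fails. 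Since every assignment has $n_1$ either at most $1$ or at least $2$, no assignment satisfies both axioms, which proves the claim.

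The main obstacle is not the verification, which is routine arithmetic, but locating an instance that genuinely forces the conflict. The subtlety is that for most instances a roughly proportional assignment satisfies both axioms at once; in particular a \emph{single} dominant seat never creates a conflict, since the shut-out parties can always point to it as the one available seat that would lift them over quota, while its holder sheds it under the `up to one' removal. The key realisation guiding the construction is that one needs several equal medium seats that cannot be split to match the vote ratios: the quota of the large party must exceed twice a single seat weight (forcing WLQ-1 to demand at least two seats for it), while its vote weight must stay small enough that holding two such seats makes it enviable to a shut-out party. Balancing these two numerical requirements---here captured by choosing the split so that $\quota(1) > 2\cdot 3$ while $v_1 = 5$ remains modest---is the crux. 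As a sanity check on the construction, one notes that the same weights under the milder split $\prof{v} = (3,1,1)$ \emph{are} satisfiable (the assignment giving each party one seat works), confirming that the skew in $\prof{v}$ is doing essential work. Finally, I would remark that this direct argument is self-contained and in particular recovers, within our identical-valuations model, the incompatibility attributed to \citet{chakrabortyISZ21envy}.
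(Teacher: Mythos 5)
Your proof is correct, but it takes a genuinely different route from the paper: the paper offers no self-contained argument for this proposition at all, instead deferring to \citet{chakrabortyISZ21envy} with the remark that their incompatibility construction already lives inside the identical-valuations model. You instead exhibit an explicit instance --- $\prof{v} = (5,1,1)$, $\vec{w} = (3,3,3)$ --- and verify exhaustively that WLQ-1 forces party~$1$ to take at least two seats (since $\quota(1) = \nicefrac{45}{7} > 6$, so one seat plus any single additional seat cannot exceed the quota) while WEF1 forbids this (a shut-out unit-vote party has ratio $0$, yet removing any one seat from party~$1$ leaves it with ratio at least $\nicefrac{3}{5} > 0$). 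I checked the arithmetic and the case analysis; both are sound, including the observation that with $n_1 = 2$ the remaining small party still passes WLQ-1 and the seat-holding small party does not envy party~$1$. What your approach buys is a fully verifiable, three-seat witness that makes the proposition independent of the cited work and of how exactly that work's valuations specialise to our setting; what the paper's approach buys is brevity and an explicit link to the fair-division literature. One small slip in your motivational prose: you say $\quota(1) \approx 6.43$ ``sits strictly between'' $3$ and $6$, but it in fact exceeds $6$ --- which is exactly what your later, correct statement (``the quota must exceed twice a single seat weight'') requires, and what the formal verification actually uses. I would fix that sentence but otherwise keep the argument as is.
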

\end{newenv}

\noindent
\new{Can we} at least satisfy upper- and lower-quota axioms at
the same time\new{?} $\DHondt$ does not satisfy UQ in the unit-weight case so it cannot
satisfy WUQ-1 (see the Appendix for an example). The Greedy method, however, is a contender
due to its connection to LRM, which satisfies UQ \cite{BalinskiYoung1982}.
\new{As it satisfies WLQ-1 it cannot satisfy WEF1, but it does satisfy WUQ-X.}

\begin{theorem}\label{thm:Greedy_WUQ-X}
The Greedy method satisfies WUQ-X.
\end{theorem}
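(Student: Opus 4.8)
The plan is to prove that the Greedy method satisfies WUQ-X by arguing about the last seat assigned to any party that exceeds its quota. Suppose towards a contradiction that some seat assignment $\vec{s}$ returned by Greedy violates WUQ-X: there is a party $p$ with $\rep{p}{\vec{s}} > \quota(p)$, yet removing some seat $t \in \vecPos{p}{\vec{s}}$ still leaves $p$ at or above its quota, i.e.\ $\rep{p}{\vec{s}} - w_t \geqslant \quota(p)$. Since seats are assigned in non-increasing weight order and the smallest-weight seat owned by $p$ is the one whose removal is least likely to drop $p$ below quota, it suffices to derive a contradiction from the \emph{last} seat $p$ received (its smallest weight). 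Let $h$ be the round in which $p$ was given its final seat, so $g_p(h) = \rep{p}{\vec{s}} - w_h$ and the violation reads $g_p(h) \geqslant \quota(p)$, equivalently $\quota(p) - g_p(h) \leqslant 0$.

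The key step is to exploit the greedy selection rule at round $h$. Because Greedy assigned seat $h$ to party $p$, we know $\quota(p) - g_p(h) \geqslant \quota(p') - g_{p'}(h)$ for every other party $p'$. Combining this with $\quota(p) - g_p(h) \leqslant 0$ gives $\quota(p') - g_{p'}(h) \leqslant 0$, hence $g_{p'}(h) \geqslant \quota(p')$ for \emph{all} parties $p'$ at the moment round $h$ begins. I would then sum over all parties: $\sum_{p' \in [m]} g_{p'}(h) \geqslant \sum_{p' \in [m]} \quota(p') = \omega \cdot \sum_{p'} \nicefrac{v_{p'}}{n} = \omega$. But the left-hand side is the total weight already distributed \emph{before} round $h$ is played, which is strictly less than $\omega$ whenever $h \leqslant k$ and at least one seat (namely seat $h$ itself) has yet to be assigned. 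This strict inequality $\sum_{p'} g_{p'}(h) < \omega \leqslant \sum_{p'} g_{p'}(h)$ is the desired contradiction.

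The main obstacle, and the place that needs the most care, is the reduction from ``any seat of $p$'' to ``the last seat of $p$'', together with the strictness bookkeeping. WUQ-X requires that removing \emph{every} seat of $p$ drops it below quota, so a violation only needs \emph{one} offending seat; I must argue that if removal of any particular seat keeps $p$ at or above quota, then removal of $p$'s smallest (hence last-assigned) seat also does, because $w_h \leqslant w_t$ for every $t \in \vecPos{p}{\vec{s}}$ gives $\rep{p}{\vec{s}} - w_h \geqslant \rep{p}{\vec{s}} - w_t \geqslant \quota(p)$. The other delicate point is the direction of inequalities: one must ensure the final contradiction is genuinely strict, which follows since seat $h$ with weight $w_h \geqslant 1$ has positive weight and is not counted in any $g_{p'}(h)$, so $\sum_{p'} g_{p'}(h) \leqslant \omega - w_h < \omega$. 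I would also note the degenerate case where $p$ receives no further seats after round $h$ is automatically handled, since $g_p(h)$ is exactly $\rep{p}{\vec{s}}$ minus the last weight, and ties broken arbitrarily do not affect the argument because the greedy inequality at round $h$ is non-strict and compatible with any tie-breaking choice.
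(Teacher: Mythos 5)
Your proof is correct and follows essentially the same route as the paper: both arguments hinge on the facts that (i) the greedy comparison at the round a party receives a seat, combined with $\sum_{p}\quota(p)=\omega$ and the not-yet-assigned weight being positive, forces that party's pre-round deficit $\quota(p)-g_p(\cdot)$ to be strictly positive, and (ii) the non-increasing weight order lets you pass from the last (smallest) seat to any seat. The only cosmetic difference is organisational --- the paper shows that the first quota-crossing seat must be the party's final seat, whereas you derive the contradiction directly at the round of the last seat via summation --- but the underlying argument is the same.
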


\begin{proof}
Take a seat assignment $\vec{s}$ constructed by the Greedy method. Assume there is a party~$p_{x}$ that
received more representation than $\quota(p_{x})$, i.e., $\rep{p_{x}}{\vec{s}} = g_{p_{x}}(k) > \quota(p_{x})$, otherwise, WUQ-X is satisfied. Let $t$ be the round after which
$g_{p_{x}}(t) > \quota(p_{x})$ holds, and so $g_{p_{x}}(t) - w_{t} < \quota(p_{x})$ also holds. 
We argue that party~$p_{x}$ does not get assigned any seat $t^* > t$. Observe that in every round $t'\in[k]$, there always exists a party~$p_{y}$ such that $\quota(p_{y}) - g_{p_{y}}(t')\geqslant 0$ as we have $\sum_{p \in [m]} \quota(p) = \omega$. It then follows, for every round $t^* > t$, that $\quota(p_{y}) - g_{p_{y}}(t^*) \geqslant 0 > \quota(p_{x}) - g_{p_{x}}(t^*)$ and hence, party~$p_{x}$ cannot be assigned seat $t^*$. \new{Thus, the Greedy method satisfies {WUQ-1}. However, for all seats $j < t$ assigned to party $p_{x}$ thus far, since $w_j \geqslant w_{t}$, it follows that $g_{p_{x}}(t) - w_j < \quota(p_{x})$, and this means that removing any seat assigned to party $p$ suffices for this party to not be above its weighted quota and thus, WUQ-X is also satisfied.}
\end{proof}


\section{House Monotonicity}\label{sec:house_mon}

Why focus on $\DHondt$ or $\Adams$ when the Greedy method satisfies both WLQ-1 and WUQ-X?
The answer lies with \emph{house monotonicity},
which asks that an increase in the number of available seats should not  
make any party worse off. Historically, its failure has 
been the cause of much political animosity \cite{Szpiro2010-pz}.
In standard apportionment\new{,} LRM satisfies LQ and UQ but fails house monotonicity,
whereas divisor methods satisfy it \cite{BalinskiYoung1982,Pukelsheim2014}.
We note that \citet{ChakrabortySS21} also study house monotonicity (under the name of \emph{resource monotonicity}) in their setting of weighted fair division.
We first consider a strong generalisation of the axiom of house monotonicity that states that the addition of a seat of \emph{any weight} does not lead to a decrease of any party's weight representation.


\begin{definition}[Full House Monotonicity, full-HM]
    We say a WSAM $F$ satisfies full-HM if for every election instance $(\prof{v},\vec{w})$ and every $w^{*}\in \mathbb{N}_{\geqslant 1}$ such that $\vec{w}^{*} = (w')_{w'\in W}$ is a non-increasing weight vector where $W = \{w\in \vec{w}\}\cup\{w^{*}\}$, it holds for $\vec{s}\in F(\prof{v},\vec{w})$ and $\vec{s}^{*}\in F(\prof{v},\vec{w}^{*})$ that $\rep{p}{\vec{s}^{*}} \geqslant \rep{p}{\vec{s}}$ for every party $p\in[m]$. 
\end{definition}


\noindent
Full-HM can always be satisfied, e.g., by non-weighted divisor methods \cite{ChakrabortySS21}.
But, as we saw before, these methods do not satisfy the strongest quota axioms, WLQ-X-r and WUQ-X. 
Unfortunately, the more desirable WSAMs we defined do not satisfy full-HM.

\begin{proposition}\label{prop:Adams_DHondt_fail_fullhm}
    $\Adams$ and $\DHondt$ fail full-HM. 
\end{proposition}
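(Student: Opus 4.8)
The plan is to refute full-HM for each method by exhibiting a single election instance together with one extra seat whose addition strictly decreases the representation of some party. The mechanism I would exploit is specific to the weighted setting: since both $\Adams$ and $\DHondt$ prioritise parties via the \emph{total weight} $g_p(t)$ accumulated so far (rather than the seat count, as the house-monotone methods of \citet{ChakrabortySS21} do), inducing a party to grab one moderately heavy seat early inflates its $g_p$ and thereby depresses its priority in all later rounds. If the lighter seats it consequently forfeits sum to more than the single seat it gained, its representation drops --- exactly the non-monotonicity we want.

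For $\DHondt$ I would take three parties with $\prof{v}=(24,23,10)$ and $\vec{w}=(20,20,5,5)$. Here parties $1$ and $2$ claim the two weight-$20$ seats and party $3$, being comparatively cheap to satisfy, wins both weight-$5$ seats, so $\rep{3}{\vec{s}}=10$. I would then insert a seat of weight $w^*=6$, giving $\vec{w}^*=(20,20,6,5,5)$. Now party $3$ grabs the weight-$6$ seat first (its ratio $\nicefrac{10}{6}$ dominates), which raises $g_3$ to $6$; in the two subsequent rounds its ratio $\nicefrac{10}{11}$ is edged out by $\nicefrac{24}{25}$ and then $\nicefrac{23}{25}$, so both weight-$5$ seats go to parties $1$ and $2$ and party $3$ ends with representation only $6<10$. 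For $\Adams$ the same idea works with $\prof{v}=(37,24,29)$ and $\vec{w}=(90,60,40,30,30)$: the infinite-ratio opening rounds hand the seats $90,60,40$ to parties $1,2,3$ (under a fixed tie-break, say by index, applied identically to both instances), after which party $3$ wins both weight-$30$ seats, since its ratios $\nicefrac{29}{40}$ then $\nicefrac{29}{70}$ stay on top, for $\rep{3}{\vec{s}}=100$. Inserting $w^*=35$ makes party $3$ take the weight-$35$ seat in the finite phase, so $g_3$ jumps to $75$ and its ratio $\nicefrac{29}{75}$ now loses both weight-$30$ seats to parties $1$ and $2$, leaving it with $40+35=75<100$.

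The routine part is verifying that every decisive round is won by the claimed party; I would present these as small ratio tables and check the tight comparisons by cross-multiplication (e.g.\ $\nicefrac{29}{70}>\nicefrac{37}{90}$ in the original but $\nicefrac{37}{90}>\nicefrac{29}{75}$ after the insertion), ensuring no ties arise on the decisive seats so that the outcomes are independent of the tie-breaking rule there. The main obstacle --- and where the construction must be engineered carefully --- is reconciling two seemingly opposed requirements on the same party: it must win the light seats in the original instance yet lose them in the enlarged one. These hinge on comparing the party's priority at nearly identical accumulated weights, so the two conditions would collide if the added seat had the same weight as the light seats. The resolution is to make $w^*$ strictly heavier than the light seats (here $6$ versus $5$, and $35$ versus $30$): this shifts the party's accumulated weight at the decisive rounds just enough (from $5$ to $6$, resp.\ from $70$ to $75$) to flip the marginal comparisons, while keeping $w^*$ small enough (below the sum of the forfeited seats) that the net effect is a strict loss.
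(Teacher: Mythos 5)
Your proposal is correct and takes essentially the same approach as the paper: both proofs exhibit explicit counterexample instances in which an added seat inflates one party's accumulated weight $g_p$ early on, causing it to forfeit lighter seats it would otherwise have won (I verified your ratio comparisons, e.g.\ $29\cdot 90 = 2610 > 2590 = 37\cdot 70$ and $37\cdot 75 = 2775 > 2610$, and no decisive round is tied). The paper's instances are merely smaller --- e.g.\ for $\DHondt$ it adds a seat heavier than all existing ones --- but the proof strategy is identical.
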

\begin{proof}
    Let us first consider $\Adams$ and consider an instance with three parties having votes $\prof{v} = (5,5,2)$ and a weight vector $\vec{w} = (8,8,3,2)$. The seat assignment returned by $\Adams$ is $\vec{s} = (1,2,3,3)$, assuming that ties are broken according to the ordering of $\prof{v}$. Thus, we have $\rep{3}{\vec{s}} = 5$ for party $3$. Suppose a weight-$4$ seat is added to $\vec{w}$ so as to obtain the weight vector $\vec{w}^{*} = (8,8,4,3,2)$. Then $\Adams$ returns $\vec{s}^{*} = (1,2,3,1,2)$. So party~$3$'s representation changes from $\rep{3}{\vec{s}} = 5$ to $\rep{3}{\vec{s}^{*}} = 4$.
    
    Consider an instance with three parties having votes $\prof{v} = (21,10,10)$ and a weight vector $\vec{w} = (2,2)$. The seat assignment returned by $\DHondt$ is $\vec{s} = (1,1)$, giving both seats to party $1$ who obtain $4$ in representation. Suppose a weight-$3$ seat is added to $\vec{w}$ so as to obtain $\vec{w}^{*} = (3,2,2)$. Then $\DHondt$ returns $\vec{s}^{*} = (1,2,3)$, assigning a seat to each party. So party~$1$'s representation changes from $\rep{1}{\vec{s}} = 4$ to $\rep{1}{\vec{s}^{*}} = 3$.
\end{proof}    

\noindent
We leave open whether there is a WSAM that satisfies full-HM along with some of our stronger proportionality axioms.
Instead, we try to achieve positive results by restricting the weight associated with an election's additional seat. This consideration leads us to the following, weaker axiom---where, as opposed to full-HM, the extra $k+1$-st seat must have a weight no larger than any of the original $k$ seats.

\begin{definition}[Minimal House Monotonicity, min-HM]
We say a WSAM $F$ satisfies min-HM if for every election instance $(\prof{v},\vec{w})$ and every $w^{*}\in \mathbb{N}_{\geqslant 1}$ such that  $w^{*}\leqslant w_{k}$ and $\vec{w}^{*} = (w')_{w'\in W}$ is a non-increasing weight vector where $W = \{w\in \vec{w}\}\cup\{w^{*}\}$, it holds for $\vec{s}\in F(\prof{v},\vec{w})$ and $\vec{s}^{*}\in F(\prof{v},\vec{w}^{*})$ that $\rep{p}{\vec{s}^{*}} \geqslant \rep{p}{\vec{s}}$ for every party $p\in[m]$. 
\end{definition}

\noindent
Positively, divisor methods clearly satisfy min-HM, as all seats prior to an extra seat are assigned in the same way.

\begin{proposition}
    All divisor methods satisfy min-HM.
\end{proposition}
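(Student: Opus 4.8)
The plan is to show that for any divisor method, the partial seat assignment produced during the first $k$ rounds on the enlarged instance $(\prof{v}, \vec{w}^{*})$ coincides exactly with the full assignment on the original instance $(\prof{v}, \vec{w})$. The key observation is that the divisor method processes seats in non-increasing order of weight, and the defining property of min-HM is that the extra seat has weight $w^{*} \leqslant w_k$, i.e.\ it is weakly smaller than every original seat. Hence, when $\vec{w}^{*}$ is sorted in non-increasing order, the extra seat occupies the final position, and the first $k$ entries of $\vec{w}^{*}$ are precisely the original weights $w_1, \ldots, w_k$ in the same order.

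First I would make this positional claim precise: since $w^{*} \leqslant w_k \leqslant w_{k-1} \leqslant \cdots \leqslant w_1$, the non-increasing weight vector $\vec{w}^{*}$ equals $(w_1, \ldots, w_k, w^{*})$ (breaking any tie between $w_k$ and $w^{*}$ so that the new seat comes last, which is consistent with the sorting). Then I would argue by induction on the round $t \in [k]$ that the divisor method makes the same choice in round $t$ on both instances. The crucial point is that the quantity $\ratio_p = v_p / f(g_p(t), w_t)$ driving round $t$ depends only on $v_p$, on the weight $w_t$ of the seat being assigned in that round, and on $g_p(t)$, the total weight party $p$ accumulated in rounds $1, \ldots, t-1$. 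In the first $k$ rounds the seat weights are identical across the two instances, and by the induction hypothesis the earlier assignments are identical, so the accumulated weights $g_p(t)$ match as well. With the same tie-breaking rule assumed fixed, the method selects the same party in round $t$ on both instances.

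It follows that after $k$ rounds the enlarged instance has reproduced $\vec{s}$ exactly, so every party's representation at that stage equals $\rep{p}{\vec{s}}$. The only remaining step of the method on $(\prof{v}, \vec{w}^{*})$ is round $k+1$, which assigns the weight-$w^{*}$ seat to some party. Since seat weights are positive ($w^{*} \in \mathbb{N}_{\geqslant 1}$), this can only increase the recipient's representation and leaves all others unchanged, giving $\rep{p}{\vec{s}^{*}} \geqslant \rep{p}{\vec{s}}$ for every party $p \in [m]$, which is exactly min-HM.

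I do not anticipate a serious obstacle here; the argument is essentially the standard reason divisor methods are house monotone, adapted to weighted seats. The one point requiring a little care is the tie-handling when $w^{*} = w_k$: the non-increasing ordering of $\vec{w}^{*}$ is not unique in that case, so I would be explicit that we may place the new seat in the final position, and note that the round-by-round argument is unaffected because the first $k$ weights are unchanged regardless. This is precisely the informal justification given in the excerpt (``all seats prior to an extra seat are assigned in the same way''), which the induction above makes rigorous.
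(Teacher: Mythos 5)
Your proof is correct and is exactly the argument the paper intends: the paper justifies this proposition with the single remark that ``all seats prior to an extra seat are assigned in the same way,'' and your induction on rounds (plus the observation that the new seat of weight $w^{*}\leqslant w_k$ lands in the final position of the sorted vector, with the $w^{*}=w_k$ tie handled as you describe) is just the rigorous version of that remark. No differences in approach worth noting.
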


\noindent
While much weaker than full-HM, min-HM is enough to differentiate between our WSAMs: the Greedy method---somewhat unsurprisingly given LRM's failure of HM with unit weights---fails min-HM. 

\begin{proposition}\label{prop:Greeedy_fails_minHM}
    The Greedy method fails min-HM. 
\end{proposition}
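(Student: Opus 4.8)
The plan is to construct an explicit counterexample, mirroring the structure of the proof of Proposition~\ref{prop:Adams_DHondt_fail_fullhm} but now respecting the min-HM constraint that the added seat has weight no larger than $w_k$. Since the Greedy method reduces to LRM on unit-weight instances, and LRM is the classical example of a method failing house monotonicity (the Alabama paradox), the natural first move is to take a \emph{unit-weight} instance exhibiting the Alabama paradox and verify that it survives as a min-HM violation in our weighted formulation. On unit weights, adding a $(k+1)$-st seat of the same weight is exactly the min-HM operation (here $w^{*} = w_k$), so any Alabama-paradox instance for LRM transfers directly.

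First I would recall that the Alabama paradox requires at least three parties. I would pick a small vote vector $\prof{v}$ together with two consecutive house sizes $k$ and $k+1$ for which LRM gives some party strictly fewer seats at $k+1$ than at $k$; the textbook instance along these lines works with votes proportioned so that incrementing the house reshuffles which fractional remainders are largest. Concretely I would search for a $\prof{v} = (v_1, v_2, v_3)$ and a unit weight vector $\vec{w} = (1,\ldots,1)$ of length $k$ such that, under the Greedy tie-breaking, party~$3$ (say) receives one seat at house size $k$ but zero seats at house size $k+1$. Because all weights equal $1$, $\rep{p}{\vec{s}} = g_p(k)$ is just the seat count, so a drop in seat count is literally a drop in representation, giving the desired violation $\rep{3}{\vec{s}^{*}} < \rep{3}{\vec{s}}$.

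The key steps in order are: (i) fix the vote vector and the two house sizes; (ii) compute the quotas $\quota(p) = \omega \cdot \nicefrac{v_p}{n}$ at both house sizes and run the Greedy assignment, tracking $\quota(p) - g_p(t)$ in each round; (iii) read off that the witnessing party loses a seat, and hence representation, when the extra seat is added; (iv) confirm the added seat satisfies $w^{*} \leqslant w_k$ (trivial in the unit-weight case). The only subtlety is that Greedy breaks ties arbitrarily, so I must choose an instance where the paradox occurs regardless of tie-breaking, or else phrase the claim so that \emph{some} seat assignment in the output violates min-HM (which is exactly what is needed, since a WSAM satisfies a property only if \emph{every} returned assignment does).

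The main obstacle I anticipate is purely combinatorial bookkeeping: finding numbers small enough to present cleanly while ensuring the remainders flip in the right direction and the tie-breaking does not accidentally rescue monotonicity. This is the standard difficulty in exhibiting the Alabama paradox, and it is a matter of a short search rather than any conceptual barrier. Once suitable numbers are in hand, the verification is a routine round-by-round computation of the Greedy assignment at both house sizes, entirely analogous to the computations already carried out in the Example and in Proposition~\ref{prop:Adams_DHondt_fail_fullhm}.
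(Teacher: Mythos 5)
Your approach is correct but takes a genuinely different route from the paper. The paper constructs a small \emph{weighted} instance from scratch: $\prof{v}=(5,4,1)$ with $\vec{w}=(4,3,2)$ extended to $\vec{w}^{*}=(4,3,2,1)$, where a particular (admissible) resolution of the ties that arise causes party~$3$ to drop from representation $2$ to $0$. You instead observe that min-HM restricted to unit-weight instances is exactly classical house monotonicity for LRM, so any Alabama-paradox instance transfers; this is a clean reduction that also makes clear the failure is inherited from LRM rather than caused by the weights, whereas the paper's example has the merit of being tiny ($k=3$) and of exhibiting the failure on genuinely distinct weights. Two things you should still nail down. First, you must actually supply the instance: $\prof{v}=(6,6,2)$ with ten unit seats gives quotas $(4.29,4.29,1.43)$ and Greedy output $(4,4,2)$, while eleven unit seats give quotas $(4.71,4.71,1.57)$ and Greedy output $(5,5,1)$, so party~$3$ loses representation; moreover the only ties here are between the symmetric parties $1$ and $2$, so the violation holds under \emph{every} tie-breaking, which is stronger than what the paper's example achieves. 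Second, do not lean solely on the paper's unproved remark that Greedy reduces to LRM on unit weights --- for the chosen instance it is safer (and easy) to run the round-by-round computation of $\quota(p)-g_p(t)$ directly, as you propose in step (ii), since that is all the proposition requires.
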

\begin{proof}
    Consider three parties, the vote vector $\prof{v} = (5,4,1)$, and a weight vector $\vec{w} = (4,3,2)$. The seat assignment returned by the Greedy method is $\vec{s} = (1,2,3)$. Now for the weight vector $\vec{w}^{*} = (4,3,2,1)$, the method assigns the first two seats to parties $1$ and $2$, respectively. In round $3$, note that $\omega\cdot\nicefrac{v_{p}}{n} - g_{p}(3) = 1$ for parties $p\in[3]$. Suppose that party~$1$ is assigned the third seat via tie-breaking. For the next round, parties $2$ and $3$ remain equally entitled to last seat. Suppose that tie-breaking leads to this seat being assigned to party~$2$. The method then returns the seat assignment $\vec{s}^{*} = (1,2,1,2)$ with party~$3$ receiving less representation than in the original election instance.
\end{proof}

\noindent
We leave  
\new{the task of investigating} other natural weakenings of full-HM 
to future work. 


\begin{table}[t]\small
\centering
\caption{\label{tab:axiomTable}
Summary 
whether a WSAM satisfies (\cmark) or violates (\xmark) a given axiom
 (for axioms satisfied by at least one of our WSAMs).\vspace*{15pt}}
\begin{tabular}{@{\ }l@{\ }c@{\ \ \ }c@{\ \ \ }c@{\ \ \ }c@{\ \ \ }c@{\ }} \toprule
                   &  \new{WEFX} & \new{WLQ-X-r}  &   WUQ-X & WUQ-1 & min-HM \\ \midrule
    $\Adams$       & \new{\cmark} & \new{\xmark}  & \cmark & \cmark & \cmark\\
    $\DHondt$      & \new{\xmark} & \new{\cmark}  & \xmark & \xmark & \cmark\\\midrule
    Greedy Method  & \new{\xmark} & \new{\cmark} & \cmark & \cmark & \xmark\\ \bottomrule
\end{tabular}
\end{table}

\section{Bundestag Case Study}\label{sec:bundestag_study}

We now present a case study regarding the allocation of chair positions 
to parties in Bundestag committees.
These are committees with specific responsibilities (e.g., Budget or Defence),
established anew in every political cycle. 
Usually, which party gets to nominate the chair of a committee is the result of negotiation---but when no consensus can be found, which happened in 8 out of 20 parliamentary sessions \new{since 1949}, standard apportionment methods are used.
Crucially, different committees have different size and influence,
so positions differ in value: 
the role of chair of the Budget Committee will be valued more highly than 
that of the Tourism Committee. 

Our objective is to compare the results produced by our weighted apportionment methods 
with the historical results, full details of which are publicly available
\cite{Feldkamp2023,wikipedia-bundestagsausschuss}.
When applying our methods, we interpret the size of a committee as a proxy for its importance. 
Of course, this can only ever be an approximation of the true value of a committee, but we believe it nonetheless can provide a first impression of how the WSAMs we study perform on real-world data.

The existing data covers all 20 legislative periods in Germany between 1949 and 2021. 
For each of these periods, between 4 and 7 parties entered parliament, between 19 and 28 committees were formed, and each committee had between 9 and 49 members.\footnote{In case any relevant data points (such as the size of a committee) changed over the course of a legislative period, we always used the start of that period as our point of reference.} To construct an election instance for a given legislative period, we take the members of parliament to be the voters,
we take the chair positions for the committees of that period to be the seats to be filled, and we use the sizes of those committees as the weights of the seats.

For each of the 20 election instances thus created, we are interested in how the historical Bundestag seat assignment fares in terms of representing parties proportionally and how that assignment compares to the assignments returned by $\Adams$, $\DHondt$, and Greedy.
%
Given a seat assignment~$\vec{s}$, we first ask which of our \new{nine} 
axioms it satisfies. For testing $\WLQo$ and $\WUQo$, we encoded the computations of the obtainable weighted quotas $\ell^{o}(p)$ and $u^{o}(p)$ into Integer Linear Programs (ILP) and employed an ILP solver to compute them efficiently.
As the binary measure of axiom satisfaction 
\new{provides only} limited insight, we introduce 
a \new{finer-grained} measure,
\emph{average distance to the weighted quota}:
    \(
        \delta(\vec{s}) = \nicefrac{1}{n}\cdot\sum\nolimits_{p\in[m]} |\rep{p}{\vec{s}} - \quota(p)|.
    \)

    This leads to eleven measures for the proportionality of a given seat assignment,
    which we use to evaluate the Bundestag allocation.
    The results are summarised in Table~\ref{tab:bundestagTable}.\footnote{The code used to generate these results is publicly available~\cite{code}.} 
    The historical seat assignments perform reasonably well in terms of our measures of quality,
    but both $\DHondt$ and Greedy do markedly better. 
    This is borne out by the rate at which the axioms are satisfied 
    and the results for the distance measure. 
    Notably, not only do the Bundestag seat assignments yield a worse median and maximum distance 
    than all the WSAMs, but $\DHondt$ and Greedy significantly outperform the Bundestag assignments 
    in this metric (significance level $\alpha$ = 0.05). 
    Of the latter group, Greedy 
    produces lower distance results across the board, even compared to $\DHondt$.
    However, these differences between Greedy and $\DHondt$ are not statistically significant.
    $\Adams$ performs badly with respect to the distance measure, but less so with respect to the envy-freeness axioms. 
    
    Despite the small sample size, our results suggest that $\DHondt$ and Greedy are deserving of
    further investigation.


\begin{table}[t]\small
\centering
\caption{\label{tab:bundestagTable}
Summary of results for the 20 Bundestag committee election instances for 1949--2021. 
For each of the four seat assignments, the table shows: $(i)$~for each axiom, the percentage of election instances for which the axiom is satisfied; and $(ii)$~for our distance measure, the median and maximum distances across the 20 election instances.\vspace*{15pt}}
\begin{tabular}{@{\ }l@{\ }c@{\ \ \ }c@{\ \ \ }c@{\ \ \ }c@{\ }} \toprule
       &  \textit{Bundestag}  &  $\Adams$ & $\DHondt$ & Greedy \\
\midrule
$\WLQo$ (\%)   &  $0$   &  $0$    &  $25$     &  $30$ \\
WLQ-X (\%) &  $20$  &  $5$    &  $95$     &  $100$ \\
\new{WLQ-X-r (\%)} &  \new{$20$}  &  \new{$35$}    &  \new{$100$}     &  \new{$100$} \\
WLQ-1 (\%) &  $45$  &  $80$   &  $100$    &  $100$ \\
\midrule
$\WUQo$ (\%)   &  $0$   &  $0$    &  $5$      & $5$ \\
WUQ-X (\%) &  $20$  &  $100$  &  $90$     & $100$ \\
WUQ-1 (\%) &  $50$  &  $100$  &  $100$    & $100$ \\
\midrule
\new{WEFX (\%)} &  \new{$0$}  &  \new{$100$}    &  \new{$10$}     &  \new{$25$} \\
\new{WEF1 (\%)} &  \new{$10$}  &  \new{$100$}    &  \new{$30$}     &  \new{$50$} \\
\midrule
Median $\delta$  &  $17.1$   &  $16.6$       &  $4.7$    &  $2.5$ \\
Maximum $\delta$ &  $66$     &  $30.8$       &  $6.6$    &  $5.9$ \\
\bottomrule
\end{tabular}
\end{table}

\section{Conclusion and Future Work}
\label{sec:conclusion}

We introduced a model for apportionment with weighted seats, 
and generalised standard methods and 
axioms from the apportionment literature to this model. 
While direct generalisations of standard axioms yield (mostly) negative results, 
mild relaxations are amenable to positive results (see Table~\ref{tab:axiomTable}). 
This positive outlook is further supported, particularly for the $\DHondt$ and Greedy methods, by our experimental case study on Bundestag committee assignments.

Besides the open questions regarding house monotonicity (see Section~\ref{sec:house_mon}), 
there are two natural directions for future work. The first is a study of other prominent apportionment properties and rules, notably population monotonicity and the \new{Sainte-Lagu\"{e}} method. 
The second is an extension of the weighted-seat notion to more general settings, e.g., multi-winner voting. 
We recently have taken initial steps in this direction~\cite{ChingomaPhD2025}.
But obtaining positive results in this more general setting is challenging. 
For instance, we find that lifting even the seemingly mild assumption of full supply 
results in the weakest of our axioms, i.e., WLQ-1 and WUQ-1, ceasing to be satisfiable.
Concrete examples are provided in the Appendix. 



\begin{ack}
We are grateful to multiple anonymous reviewers for their helpful feedback.
This work was supported by the Dutch Research Council (NWO) as part of the \emph{Collective Information} project funded under the VICI scheme (grant number 639.023.811); by the Austrian Science Fund (FWF), Austria (grant numbers J4581, 10.55776/PAT7221724, and 10.55776/COE12); by \emph{netidee} F\"orderungen, Austria (\url{https://www.netidee.at}); and by the Vienna Science and Technology Fund (WWTF), Austria (grant number 10.47379/ICT23025).
\end{ack}


\bibliography{ecai25}

\clearpage

\section{Appendix: Supplementary Material}\label{sec:appendix}

In this technical appendix we present the remaining proofs omitted from the body of the paper and provide additional examples to illustrate some of the technical findings discussed in the paper. We also discuss an additional upper-quotas axiom in some detail. The code used for the Bundestag case study is available online~\cite{code}.

\subsection{Supplementary Material for Section~\ref{sec:lower-quota}}

We begin with the proof of Proposition~\ref{prop:WLQ_exists_2_parties} showing that $\WLQo$ can be satisfied when there are only two parties.

\begin{proof}[Proof of Proposition~\ref{prop:WLQ_exists_2_parties}]
Consider an election instance with parties $1$ and $2$.
By definition of the obtainable weighted lower quota, there exists a set \(T\) of seats such that 
$\sum_{t\in T} w(t) = \ell^o(1)$. Let $\vec{s}$ now be a seat assignment such that party~$1$ is assigned all seats in $T$ and party~$2$ gets all of the other seats. By definition,
$\rep{1}{\vec{s}} = \ell^o(1)$ holds for party~$1$. For party~$2$ we have: 
\begin{align*}
    \rep{2}{\vec{s}} & = \omega - \rep{1}{\vec{s}} = \omega - \ell^o(1)\\
                       & \geqslant \omega - \omega\cdot\frac{v_{1}}{n} =  \omega\cdot\frac{n - v_{1}}{n} =
                         \omega\cdot\frac{v_{2}}{n} \geqslant \ell^o(2).
\end{align*}
Hence, $\WLQo$ is satisfied.
\end{proof}

\noindent
Next, we present the proof of Proposition~\ref{prop:WLQ_is_NP-hard_to_find}, establishing the intractability of finding seat assignments that satisfy $\WLQo$. Recall that the claim is that, if there exists an algorithm~$\alpha$ to solve this problem in polynomial time, then $\P=\NP$.

\begin{proof}[Proof of Proposition~\ref{prop:WLQ_is_NP-hard_to_find}]
Assume such an algorithm $\alpha$ exists. 
\new{Recall that the \textsc{Partition} problem asks, for a given multiset $X = \{x_{1},\ldots,x_{k}\}$ of $k$ positive integers, whether there exists a partition of $X$ into two subsets $X_{1}$ and $X_{2}$ such that $\sum_{x\in X_{1}} x = \sum_{x\in X_{2}} x$.
Let $X$ be such an instance of the $\textsc{Partition}$ problem.}
We create an election instance $(\prof{v},\vec{w})$, as follows. 
Set the weight vector $\vec{w} = (x)_{x\in X}$ to be the non-increasing vector of the $k$ elements in $X$. Thus, we have $\omega = \sum_{x\in X}x$. Take two parties with $\prof{v} = (1,1)$, so each party $p\in [2]$ receives half of the total votes. Now let $\vec{s}$ be the seat assignment produced by $\alpha$ on input $(\prof{v},\vec{w})$.
We now prove that $X$ is a positive instance of $\textsc{Partition}$ if and only if 
$\rep{p}{\vec{s}} = \nicefrac{\omega}{2}$ for every $p\in [2]$. 

$(\Rightarrow)$ Assume that $X$  is a positive instance of $\textsc{Partition}$. 
Thus, there exist subsets $X_{1}$ and $X_{2}$ such that $\sum_{x\in X_{1}}x = \sum_{x\in X_{2}}x$. In particular, this means that $\sum_{x\in X_{t}}x = \nicefrac{\omega}{2}$, 
for each $t\in[2]$. Consider the constructed election instance $(\prof{v},\vec{w})$. Each party deserves $\nicefrac{k}{2}$ seats and has a weighted lower quota of $\nicefrac{\omega}{2}$. As $\min\left(|X_1|,|X_2|\right) \leqslant \nicefrac{k}{2}$, there exists a way to receive weight $\nicefrac{\omega}{2}$ with $\nicefrac{k}{2}$ seats. It follows that both parties have an obtainable lower quota of $\nicefrac{\omega}{2}$. Finally, $\WLQo$ is satisfiable, as the seat assignment that gives all seats that correspond to an element of $X_1$ to party $1$ and every seat that corresponds to an element of $X_2$ to party $2$, satisfies $\WLQo$. It follows that in the seat assignment produced by $\alpha$ each party must have representation $\nicefrac{\omega}{2}$.

$(\Leftarrow)$ Assume that $\rep{p}{\vec{s}} = \nicefrac{\omega}{2}$ for both parties $p$.
Let $X_1$ be the set of all elements that correspond to a seat that is allocated to party $1$
and let $X_2$ be the set of all elements that correspond to a seat that is allocated to party $2$.
Then we must have $\sum_{x\in X_{1}}x = \nicefrac{\omega}{2} = \sum_{x\in X_{2}}x$,
and hence $X$ is a positive instance of $\textsc{Partition}$.

To summarise, we can solve $\textsc{Partition}$ in polynomial time by transforming it 
into the election instance $(\prof{v},\vec{w})$, running $\alpha$ on that instance, and finally checking
whether $\rep{p}{\vec{s}} = \nicefrac{\omega}{2}$ holds for both parties $p\in[2]$. As $\textsc{Partition}$ is known to be \NP-complete,
this implies that $\P=\NP$.
\end{proof}

\noindent
We move on to the proof of Proposition~\ref{prop:WLQo_implies_WLQ-1}, claiming that $\WLQo$ implies WLQ-1.

\begin{proof}[Proof of Proposition \ref{prop:WLQo_implies_WLQ-1}]
    Consider some party~$p$ with $\ell^\#(p) = \lfloor k\cdot\nicefrac{v_{p}}{n}\rfloor$. Now suppose we iterate through the seats in non-increasing order of weight and assign $h$ seats to party~$p$ such that $h\leqslant\ell^\#(p)$ and $\sum_{t\in [h]} w_{t} \leqslant \quota(p) < \sum_{t\in [h+1]} w_{t}$. We know that $\sum_{t\in [h]} w_{t} \leqslant \quota(p)$ and that it is obtainable with at most $\ell^\#(p)$ seats, i.e., $\sum_{t\in [h]} w_{t}\in R(\vec{w})_{[\ell^\#(p)]}$. Since $\ell^{o}(p)$ is the maximal value amongst such weights, we get $\ell^{o}(p) \geqslant \sum_{t\in [h]} w_{t}$. Now assume a seat assignment $\vec{s}$ provides $\WLQo$, so we have $\rep{p}{\vec{s}} \geqslant \ell^{o}(p)$ for party~$p$. Assume that $\rep{p}{\vec{s}} < \quota(p)$ and consider the seats assigned to party~$p$ in $\vec{s}$. If they are exactly the same $h$ seats as selected above, then seat $h+1$ works so that $\rep{p}{\vec{s}} + w_{h+1} = \ell^{o}(p) + w_{h+1} \geqslant \sum_{t\in [h]} w_{t} + w_{h+1} > \quota(p)$. Otherwise, party~$p$ did not receive one of those $h$ seats. Now, since we have $w_{t}\geqslant w_{h+1}$ for all those seats $t\in [h]$, it must be the case that $\rep{p}{\vec{s}} + w_{t} > \quota(p)$ holds for every seat $t\in [h]$. Thus, WLQ-1 is satisfied.
\end{proof}

\noindent
Here is an example that shows that WLQ-X does not imply $\WLQo$.

\begin{example}\label{exm:WLQ-X_does_not_imply_WLQo}
Consider three parties, a vote vector $\prof{v} = (3,2,1)$, and a weight vector $\vec{w} = (3,2,1)$, meaning that $\omega = 6$. For parties $1$ and $2$, observe that $\quota(1) = \ell^{o}(1) = 3$ and $\quota(2) = \ell^{o}(2) = 2$ hold, respectively, while for party~$3$, we have $\quota(3) = 1$ and $\ell^{o}(3) = 0$. Now, consider the seat assignment $\vec{s} = (3,2,1)$. This clearly does not satisfy $\WLQo$ with $\rep{1}{\vec{s}} = 1 < \ell^{o}(1) = 3$ but note that the addition of any of the two seats that party~$1$ did not get assigned would suffice in helping it cross $\quota(1) = 3$, so WLQ-X is satisfied.
\end{example}

\noindent
We continue with the proof of Proposition~\ref{prop:WLQ_X_exists_m=2}, which states that, for two-party elections, we can always find a seat assignment that providess $\WLQo$.

\begin{proof}[Proof of Proposition~\ref{prop:WLQ_X_exists_m=2}]
Recall that the weight vector $\vec{w} = (w_{1},\ldots,w_{k})$ needs to be non-increasing (so $w_k$ is minimal).
We devise a method to find a seat assignment $\vec{s}$ that provides WLQ-X: 
let $t \in [k]$ be the minimal value such that $\sum^k_{i = t} w_{i} > \quota(1)$, and assign seats $t+1$ to $k$ to party~$1$ (so $\rep{1}{\vec{s}} \leqslant \quota(1)$).
Then assign the remaining seats to party~$2$ to obtain seat assignment $\vec{s}$. 

Observe that amongst the seats that party~$1$ was not assigned, seat $t$ has the lowest weight $w_{t}$. Moreover, we have $\rep{1}{\vec{s}} + w_{t} > \quota(1)$. Hence, WLQ-X is satisfied with respect to party $1$.
Now, let us assess the situation for party~$2$. Since $\rep{1}{\vec{s}} \leqslant \quota(1)$ holds by the choice of $t$ and we know $\quota(1) + \quota(2) = \rep{1}{\vec{s}} +\rep{2}{\vec{s}} = \omega$ (as $\vec{s}$ is complete), it must be the case that $\rep{2}{\vec{s}} \geqslant \quota(2)$. Hence, WLQ-X is also satisfied with respect to party~$2$.
\end{proof}

\noindent
Here is an example illustrating the failure of WLQ-1 for $\Adams$. 

\begin{namedexample}[$\Adams$ fails WLQ-1]\label{Adams-fails-WLQ-1}
Consider a scenario with four parties, $\prof{v} = (9,1,1,1)$, and $\vec{w} = (1,1,1,1)$.
Then, the weighted quota of party $1$ is $\quota(1) = 4\cdot \nicefrac{9}{12} = 3$. However, $\Adams$ gives each party $p\in[4]$ an initial ratio of $\textit{ratio}_p = \infty$, so each party receives exactly one seat.
Consequently, party~$1$ is two seats away from its weighted lower quota.
\end{namedexample}

\subsection{\new{Supplementary} Material for Section~\ref{sec:upper-quota}}

First, let us discuss the notion of \emph{obtainable upper quota} we had skimmed over in the body of the paper.

In contrast to the obtainable lower quota, we do not incorporate
an `upper quota of seats' in defining a weighted upper quota for party $p$ that is obtainable.
\new{The reason is that} there are election instances where the sum of obtainable weighted upper quotas is less than the total weight,
and, therefore, any reasonable upper-quota axiom based on this notion is bound to be violated. 

\begin{example}
Consider a scenarios with five parties with $\prof{v} = (40,15,15,15,15)$ and $\vec{w} = (5,1,1,1,1,1)$.
Then party $1$ has an upper quota of seats of $2$ seats but and an upper quota of weight of less
than $5$. All other parties have an upper quota of seats of just $1$. Thus, the obtainable upper quotas taking the `upper quota of seats' into
account would be $2$ for party $1$ and $1$ for the other parties, which sums to $7 < 10 = \omega$.  
\end{example}

\noindent
We thus define the obtainable weighted upper quota as follows:
\[
    u^o(p) = \min\left\{r\in R(\vec{w})_{[k]} \mid r \geqslant \quota(p)\right\}.
\]

\noindent
Now we can define the axiom $\WUQo$ based on $u^o$.

\begin{definition}[Obtainable Weighted-seat Upper Quota, $\WUQo$]
    A seat assignment $\vec{s}$ provides $\WUQo$ if for every party $p$ it is the case that $\rep{p}{\vec{s}} \leqslant u^o(p)$.
\end{definition}

\noindent
Note that for unit-weight election instances, $\WUQo$ reduces to the standard UQ property.
For this definition of the obtainable upper quota, we achieve essentially the same 
results as for $\WLQo$. The following four results can be seen as direct counterparts to those we obtained for $\WLQo$. 

\begin{proposition}\label{prop:WUQ_exists_2_parties}
    For every election instance with two parties there exists a seat assignment 
    that satisfies $\WUQo$.
\end{proposition}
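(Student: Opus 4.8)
The plan is to mirror the proof of Proposition~\ref{prop:WLQ_exists_2_parties}, exploiting the fact that with two parties the representations are complementary: every (complete) seat assignment $\vec{s}$ satisfies $\rep{1}{\vec{s}} + \rep{2}{\vec{s}} = \omega$, while the quotas satisfy $\quota(1) + \quota(2) = \omega$ (since $v_1 + v_2 = n$). The guiding observation is that, just as bounding party~$1$ from \emph{below} forced a lower bound on party~$2$ in the lower-quota argument, here bounding party~$1$ from \emph{above} will force the desired upper bound on party~$2$.

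First I would invoke the definition of the obtainable weighted upper quota. Since $u^o(1) \in R(\vec{w})_{[k]}$, there is a set $T \subseteq [k]$ of seats with $\sum_{t \in T} w_t = u^o(1)$. Let $\vec{s}$ be the seat assignment giving party~$1$ exactly the seats in $T$ and party~$2$ all remaining seats. By construction $\rep{1}{\vec{s}} = u^o(1)$, so the bound $\rep{1}{\vec{s}} \leqslant u^o(1)$ holds trivially for party~$1$.

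It then remains to verify party~$2$. Using complementarity together with $u^o(1) \geqslant \quota(1)$ (which holds because $u^o(1)$ is a minimum taken over values at least $\quota(1)$), I would write
\[
\rep{2}{\vec{s}} = \omega - u^o(1) \leqslant \omega - \quota(1) = \quota(2) \leqslant u^o(2),
\]
where the final inequality again uses that $u^o(2)$ is, by definition, at least $\quota(2)$. Hence $\rep{2}{\vec{s}} \leqslant u^o(2)$, and so $\WUQo$ holds for both parties.

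I do not expect a genuine obstacle here: unlike the case of three or more parties, the two-party setting is rescued entirely by complementarity, which lets party~$1$'s tight upper bound supply exactly the slack party~$2$ needs. The only points requiring care are that $T$ exists purely from the membership $u^o(1) \in R(\vec{w})_{[k]}$, and that both defining minima for $u^o(1)$ and $u^o(2)$ sit weakly above their respective quotas---both of which are immediate from the definition of $u^o$.
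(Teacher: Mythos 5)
Your proof is correct and follows essentially the same route as the paper's: construct $T$ realising $u^o(1)$, give it to party~$1$, and use complementarity plus $u^o(1)\geqslant \quota(1)$ and $u^o(2)\geqslant \quota(2)$ to bound party~$2$. The paper writes the middle step as $\omega - u^o(1) \leqslant \omega - \omega\cdot\nicefrac{v_1}{n} = \omega\cdot\nicefrac{v_2}{n} \leqslant u^o(2)$, which is exactly your chain with the quotas unfolded.
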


\begin{proof}
Consider an election instance with parties $1$ and $2$.
By the definition of the obtainable weighted upper quota, there exists a set of seats $T$ such that $\sum_{t\in T} w(t) = u^o(1)$. Now, let $\vec{s}$ be a seat assignment such that
party~$1$ gets all seats in $T$, and party~$2$ all other seats. By definition,
$\rep{1}{\vec{s}} = u^o(1)$ for party~$1$. Moreover, for party~$2$, we have: 
\begin{align*}
    \rep{2}{\vec{s}} & = \omega - \rep{1}{\vec{s}} = \omega - u^o(1)\\
                       & \leqslant \omega - \omega\cdot\frac{v_{1}}{n} =  \omega\cdot\frac{n - v_{1}}{n} =
                         \omega\cdot\frac{v_{2}}{n} \leqslant u^o(2).
\end{align*}
Hence, $\WUQo$ is satisfied.
\end{proof}

\begin{proposition}\label{prop:WUQ_may_not_exist}
    There are election instances where no complete seat assignment provides~$\WUQo$.
\end{proposition}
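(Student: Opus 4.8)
The plan is to mirror the impossibility proof for $\WLQo$ (Proposition~\ref{prop:WLQ_may_not_exist}) and exhibit a single small instance on which the obtainable weighted upper quota cannot be met by any complete seat assignment. Since Proposition~\ref{prop:WUQ_exists_2_parties} guarantees that two parties always admit a $\WUQo$ assignment, the witnessing instance must involve at least three parties.

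First I would record a structural warning about why the naive counting argument is doomed. Because $u^o(p) = \min\{r\in R(\vec{w})_{[k]} \mid r\geqslant \quota(p)\} \geqslant \quota(p)$ for every party, and $\sum_{p} \quota(p) = \omega$, we always have $\sum_{p} u^o(p) \geqslant \omega$. Hence, in contrast to the old $u_o^*$ variant (where the sum of the caps could fall below $\omega$), simply summing the upper-quota caps never produces a contradiction on its own. The impossibility must therefore be genuinely \emph{combinatorial}, arising from the indivisibility of a single heavy seat rather than from a shortfall in the total weight budget.

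Concretely, I would take the three-party instance $\prof{v} = (1,1,1)$ and $\vec{w} = (3,2,1)$, so that $\omega = 6$ and $\quota(p) = \omega\cdot\nicefrac{1}{3} = 2$ for each party $p\in[3]$. Since $R(\vec{w})_{[k]} = \{0,1,2,3,4,5,6\}$ contains the value $2$, we obtain $u^o(p) = 2$ for every party. The crux is then the following observation: in any complete seat assignment the weight-$3$ seat must be handed to \emph{some} party, and that party's representation is at least $3 > 2 = u^o(p)$, so $\WUQo$ is violated for it. I expect this last step to be the only delicate point—one must argue that completeness forces the heavy seat onto some party no matter how the two lighter seats are distributed, so that \emph{every} assignment (not merely some) fails the axiom, which yields the claim.
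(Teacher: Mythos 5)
Your proposal is correct and matches the paper's own proof: it uses the identical instance $\prof{v}=(1,1,1)$, $\vec{w}=(3,2,1)$ with $u^o(p)=2$ for all parties, and the same observation that assigning all three seats forces some party above weight $2$ (the paper states this slightly more tersely, while you make explicit that the weight-$3$ seat is the culprit). The preliminary remark that $\sum_p u^o(p)\geqslant\omega$ is a pleasant aside but plays no role in the argument.
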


\begin{proof}
    We use the election instance familiar from the proof of Proposition~\ref{prop:WLQ_may_not_exist}, where there are three parties, a weight vector $\vec{w} = (3,2,1)$, and a vote vector $\prof{v} = (1,1,1)$. 
    Then each party $p\in [3]$ has an obtainable weighted upper quota of $u^o(p) = 2$. However, providing each party with at most $2$ in representation cannot be achieved if we need to assign all three seats. 
\end{proof}

\begin{proposition}\label{prop:WUQ_is_NP-hard_to_find}
If there exists a polynomial-time algorithm $\alpha$ that finds a seat assignment $\vec{s}$ that provides $\WUQo$ whenever such a seat assignment exists, then $\P=\NP$.
\end{proposition}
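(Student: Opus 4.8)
The plan is to mirror the reduction used for Proposition~\ref{prop:WLQ_is_NP-hard_to_find}, which established the analogous intractability result for $\WLQo$ via a reduction from $\textsc{Partition}$. Since $\WUQo$ is defined in a way that is symmetric to $\WLQo$ (one bounds representation from below by $\ell^o(p)$, the other from above by $u^o(p)$), I expect the same instance construction to work with only the direction of the key inequalities reversed.

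First I would assume that such a polynomial-time algorithm $\alpha$ exists. Given an instance $X = \{x_1,\ldots,x_k\}$ of $\textsc{Partition}$, I would construct the election instance $(\prof{v},\vec{w})$ exactly as before: set $\vec{w} = (x)_{x \in X}$ to be the non-increasing vector of the elements of $X$, so that $\omega = \sum_{x\in X} x$, and take two parties with $\prof{v} = (1,1)$, so each party has quota $\quota(p) = \nicefrac{\omega}{2}$. I would then run $\alpha$ on this instance to obtain a seat assignment $\vec{s}$, and claim that $X$ is a positive instance of $\textsc{Partition}$ if and only if $\rep{p}{\vec{s}} = \nicefrac{\omega}{2}$ for both parties.

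For the forward direction, if $X$ partitions into $X_1,X_2$ with equal sums, then each party's obtainable upper quota is $u^o(p) = \min\{r \in R(\vec{w})_{[k]} \mid r \geqslant \nicefrac{\omega}{2}\} = \nicefrac{\omega}{2}$, since the equal partition witnesses that $\nicefrac{\omega}{2}$ is itself attainable as a representation value. The assignment giving party~$1$ the seats corresponding to $X_1$ and party~$2$ those corresponding to $X_2$ then satisfies $\WUQo$, so $\alpha$ must return an assignment with both representations at most $\nicefrac{\omega}{2}$; since they sum to $\omega$, both equal $\nicefrac{\omega}{2}$. For the backward direction, if $\alpha$ returns $\vec{s}$ with $\rep{p}{\vec{s}} = \nicefrac{\omega}{2}$ for both parties, reading off the seats assigned to each party directly yields a balanced partition of $X$. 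I would close by noting that this gives a polynomial-time decision procedure for $\textsc{Partition}$, an $\NP$-complete problem, forcing $\P = \NP$.

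The only subtlety — and the place I would be most careful — is verifying that $u^o(p)$ really collapses to exactly $\nicefrac{\omega}{2}$ in the positive case, so that the $\WUQo$ constraint becomes the tight requirement $\rep{p}{\vec{s}} \leqslant \nicefrac{\omega}{2}$ rather than a looser bound that could be met by unbalanced assignments. This hinges on $\nicefrac{\omega}{2}$ being a genuine element of $R(\vec{w})_{[k]}$ precisely when a balanced partition exists, which is exactly the content of the $\textsc{Partition}$ instance; in the negative case the reduction does not need a clean value of $u^o(p)$, since we only use the "if and only if" characterisation in one logical direction there. Everything else is the same bookkeeping as in the $\WLQo$ proof, so I do not anticipate any genuine obstacle beyond stating these inequalities in the correct direction.
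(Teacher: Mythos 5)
Your proposal is correct and follows essentially the same route as the paper: the paper's own proof simply reuses the $\textsc{Partition}$ reduction from Proposition~\ref{prop:WLQ_is_NP-hard_to_find}, observing that in that instance the obtainable upper quota coincides with the obtainable lower quota, so the same argument applies. Your write-up is in fact slightly more careful than the paper's, in that you explicitly verify $u^o(p) = \nicefrac{\omega}{2}$ in the positive case and note that the negative case only requires one direction of the characterisation.
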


\begin{proof}
    Consider the reduction from the $\textsc{Partition}$ problem in the proof of Proposition~\ref{prop:WLQ_is_NP-hard_to_find}. Observe that in the election instance constructed in this reduction, the obtainable weighted lower quota $\ell^o(p)$ for each party $p\in[2]$ is exactly equal to the obtainable weighted upper quota $u^o(p)$. So the same argument provided for $\WLQo$ also works for~$\WUQo$.
\end{proof}

\begin{proposition}\label{prop:WUQ_pseudopoly_to_find}
    Given a constant number of parties and the weights in $\vec{w}$ being polynomial in the input size, finding a seat assignment $\vec{s}$ that provides $\WUQo$ can be done in polynomial time, assuming such a seat assignment exists.
\end{proposition}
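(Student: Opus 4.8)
The plan is to adapt the dynamic programming algorithm from the proof of Proposition~\ref{prop:WLQ_pseudopoly_to_find} essentially verbatim, changing only the final check performed on the tuples in $\mathcal{W}_{k}$. First I would recall that the algorithm builds, for each $i\in[k]$, a set $\mathcal{W}_{i}$ of tuples $(W_{1},\ldots,W_{m})$ recording the possible total seat-weights assigned to each of the $m$ parties after the first $i$ seats have been distributed. The recurrence computing $\mathcal{W}_{i+1}$ from $\mathcal{W}_{i}$ and $w_{i+1}$ is unchanged, since it merely enumerates, for each surviving tuple, the $m$ choices of which party receives seat $i+1$. Under the stated assumptions---a constant number $m$ of parties and weights polynomial in the input size---the total weight $\omega$ is polynomial, so each coordinate $W_{p}$ ranges over polynomially many values, giving at most $\omega^{m}$ tuples per layer and hence a polynomial-time construction of $\mathcal{W}_{k}$.

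The only substantive change is the acceptance test. For $\WLQo$ the algorithm checked $W_{p}\geqslant \ell^{o}(p)$ for every party; here I would instead check $W_{p}\leqslant u^{o}(p)$ for every party $p\in[m]$, accepting a tuple precisely when this holds for all $p$. The value $u^{o}(p) = \min\{r\in R(\vec{w})_{[k]} \mid r \geqslant \quota(p)\}$ can be computed in polynomial time by the same observation used for $\ell^{o}$: determining the obtainable upper quota amounts to solving a \textsc{SubsetSum}-type problem over the seat weights, and under the assumption that the weights (and hence $\omega$) are polynomial, the standard pseudo-polynomial dynamic programming algorithm for \textsc{SubsetSum} runs in genuine polynomial time. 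Thus each individual test $W_{p}\leqslant u^{o}(p)$ is computable efficiently, and there are only polynomially many tuples to test.

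Correctness follows from the fact that $\mathcal{W}_{k}$ contains exactly the multisets of per-party weight totals realisable by some complete seat assignment: any tuple in $\mathcal{W}_{k}$ corresponds to at least one seat assignment $\vec{s}$ with $\rep{p}{\vec{s}} = W_{p}$, and conversely every seat assignment induces such a tuple. Hence a tuple passing the test $W_{p}\leqslant u^{o}(p)$ for all $p$ yields an assignment providing $\WUQo$, and if such an assignment exists its induced tuple is present and will be found. Once an accepting tuple is identified, a witnessing assignment is recovered by the usual back-pointer bookkeeping through the layers $\mathcal{W}_{k},\ldots,\mathcal{W}_{1}$, which adds only polynomial overhead.

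I do not expect a genuine obstacle here, since the argument is a near-mechanical mirror of Proposition~\ref{prop:WLQ_pseudopoly_to_find}; the main thing to get right is simply flipping the inequality and confirming that $u^{o}(p)$ inherits the same polynomial-time computability as $\ell^{o}(p)$ under the polynomial-weight assumption. If anything merits a word of care, it is ensuring the final test uses the obtainable upper quota $u^{o}$ (rather than the na\"{\i}ve $\lceil \omega\cdot\nicefrac{v_{p}}{n}\rceil$), so that the property being verified matches the definition of $\WUQo$ exactly.
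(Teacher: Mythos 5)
Your proposal is correct and follows essentially the same route as the paper's own proof: reuse the dynamic program from Proposition~\ref{prop:WLQ_pseudopoly_to_find} and replace the final test with $W_{p}\leqslant u^{o}(p)$, computing $u^{o}(p)$ via the pseudo-polynomial \textsc{SubsetSum} algorithm, which is polynomial under the stated weight assumption. The extra remarks on tuple realisability and back-pointer recovery are fine but not needed beyond what the paper records.
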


\begin{proof}
    Consider the dynamic programming algorithm from the proof of Proposition~\ref{prop:WLQ_pseudopoly_to_find}. Observe that the same algorithm can be deployed to find $\WUQo$-providing seat assignments with the following simple modification: once $\mathcal{W}_{k}$ is computed, the algorithm checks each tuple $(W_{1},\ldots,W_{m})$ to determine whether, for every $p\in[m]$, it holds that $W_{p}\leqslant u^{o}(p)$ instead of whether $W_{p}\geqslant \ell^{o}(p)$. And we can compute $u^o(p)$ in polynomial time due to $(i)$ the assumption on the weights in $\vec{w}$, and $(ii)$ observing this task is also equivalent to solving the \textsc{SubsetSum} problem. 
\end{proof}

\noindent


\noindent
How do WUQ-1 and WUQ-X relate to $\WUQo$? Here we \new{uncover a} difference between upper and lower quota: 
$\WUQo$ \new{implies not only} WUQ-1, but WUQ-X \new{as well}.

\begin{proposition}\label{prop:WUQo_implies_WUQ-X}
 $\WUQo$ implies WUQ-X.
 \end{proposition}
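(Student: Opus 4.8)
The plan is to exploit a feature special to the upper-quota side: under $\WUQo$, any party that strictly exceeds its quota must in fact have representation \emph{exactly} equal to $u^o(p)$, the smallest obtainable representation value at or above $\quota(p)$. First I would fix a seat assignment $\vec{s}$ satisfying $\WUQo$ and an arbitrary party $p$. If $\rep{p}{\vec{s}} \leqslant \quota(p)$, then the first disjunct of WUQ-X holds immediately and there is nothing to prove, so I would assume $\rep{p}{\vec{s}} > \quota(p)$ and aim to establish the second disjunct.

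The key observation is that $\rep{p}{\vec{s}}$ is itself the sum of the weights of the seats in $\vecPos{p}{\vec{s}}$, hence $\rep{p}{\vec{s}} \in R(\vec{w})_{[k]}$, and by assumption $\rep{p}{\vec{s}} \geqslant \quota(p)$. Thus $\rep{p}{\vec{s}}$ is one of the values over which the minimum defining $u^o(p)$ is taken, which gives $u^o(p) \leqslant \rep{p}{\vec{s}}$. Combined with the $\WUQo$ inequality $\rep{p}{\vec{s}} \leqslant u^o(p)$, this forces $\rep{p}{\vec{s}} = u^o(p)$.

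Finally I would remove an arbitrary seat $t \in \vecPos{p}{\vec{s}}$. The value $\rep{p}{\vec{s}} - w_t$ is again a sum over a subset of seats, namely $\vecPos{p}{\vec{s}} \setminus \{t\}$, so it lies in $R(\vec{w})_{[k]}$; and since weights are positive integers we have $\rep{p}{\vec{s}} - w_t < \rep{p}{\vec{s}} = u^o(p)$. By the minimality built into the definition of $u^o(p)$, any element of $R(\vec{w})_{[k]}$ that is strictly below $u^o(p)$ cannot satisfy the constraint $r \geqslant \quota(p)$, and so must be strictly below $\quota(p)$. Hence $\rep{p}{\vec{s}} - w_t < \quota(p)$, which is exactly what WUQ-X demands. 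As $t$ was an arbitrary seat of $p$, and $p$ an arbitrary party, WUQ-X holds for the whole assignment.

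The argument is short, and the step I would be most careful about is the collapse $u^o(p) = \rep{p}{\vec{s}}$ (rather than a mere inequality) together with the fact that minimality of $u^o(p)$ applies to the \emph{reduced} value $\rep{p}{\vec{s}} - w_t$ precisely because it too is a realizable representation in $R(\vec{w})_{[k]}$. This is exactly the asymmetry with the lower-quota side: there, removing weight need not land on an obtainable value below the quota in a controlled way, which is why $\WLQo$ yields only WLQ-1, whereas here the `downward' closure under deleting a seat is what lets $\WUQo$ deliver the stronger WUQ-X.
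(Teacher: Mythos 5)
Your proof is correct and rests on the same key idea as the paper's: the reduced value $\rep{p}{\vec{s}} - w_t$ is itself an element of $R(\vec{w})_{[k]}$, so the minimality in the definition of $u^o(p)$ forbids it from sitting at or above $\quota(p)$. The paper phrases this as a contrapositive (a violation of WUQ-X would force $\rep{p}{\vec{s}} - w_t \geqslant u^o(p)$ and hence $\rep{p}{\vec{s}} > u^o(p)$), whereas you argue directly via the collapse $\rep{p}{\vec{s}} = u^o(p)$, but this is only a repackaging of the same argument.
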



\begin{proof} 
    Assume WUQ-X is violated by a seat assignment $\vec{s}$. Then there exists a party~$p$ such that for every seat $t\in \vecPos{p_{x}}{\vec{s}}$, it holds that $\rep{p}{\vec{s}} > \rep{p}{\vec{s}} - w_{t} \geqslant \quota(p)$. But then this means that $\rep{p}{\vec{s}} - w_{t}$ is an achievable weight representation value at least as large as $\quota(p)$, i.e., $\rep{p}{\vec{s}} - w_{t} \in \{r\in R(\vec{w})_{[k]}:r\geqslant \quota(p)\}$. Since $u^o(p)$ is the smallest of the weights in $\{r\in R(\vec{w})_{[k]}:r\geqslant \quota(p)\}$, we get $\rep{p}{\vec{s}} > \rep{p}{\vec{s}} - w_{t} \geqslant u^o(p)$ and hence, $\WUQo$ is also violated.
\end{proof}

\noindent
Here is an example that shows that WUQ-X does not imply $\WUQo$.

\begin{example}
   Consider three parties, a vote vector $\prof{v} = (3,2,1)$, and a weight vector $\vec{w} = (3,2,1)$, and (so $\omega = 6$). See that $\quota(1) = u^{o}(1) = 3$, $\quota(2) = u^{o}(2) = 2$ and $\quota(3) = u^{o}(3) = 1$ for the parties. Now, take a seat assignment $\vec{s} = (3,2,1)$. This does not satisfy $\WUQo$ as we get that $\rep{3}{\vec{s}} = 3 > u^{o}(3) = 1$ but note that the removal of the seat that party~$3$ was assigned, would suffice in helping it go below $\quota(3) = 1$. 
\end{example}

\noindent
\begin{newenv}
Here is an example that shows that WEF1 does not imply WUQ-X.

\begin{example}\label{exp:WEF1-WUQ-X}
   Consider two parties, a vote vector $\prof{v} = (1,1)$, and a weight vector $\vec{w} = (11,2,1)$, and (so $\omega = 14$). See that $\quota(1) = \quota(2) = 7$. Now, take a seat assignment $\vec{s} = (1,2,1)$. This does not satisfy WUQ-X as we get that $\rep{1}{\vec{s}} - w_3 = 12-1 > 7$ but note that the removal of seat $1$ from party~$1$ would remove any envy from party~$2$. 
\end{example}
\end{newenv}

\noindent
The following example illustrates the well-known fact that D'Hondt violates UQ already in the unit-weight setting by an arbitrary number of seats \cite{BalinskiYoung1982}. This example doubles as an instance where WWEF1 is satisfied and thereby shows that WWEF1 does not imply WUQ-1.

\begin{namedexample}[$\DHondt$ fails WUQ-1 \new{and WWEF1 does not imply WUQ-1}]\label{exm:dhont_fails_WUQ-1}
Consider 101 parties, $\prof{v} = (100,1,\ldots,1)$, and $\vec{w} = (1,1,1,1)$. Note that $\quota(1) = 2$ and \new{$\quota(p) = 0.02$} for all $p\in\{2,\ldots,101\}$. $\DHondt$ assigns all seats to party~$1$ which violates WUQ-1. 

\new{We will now show that while this $\DHondt$ seat assignment $\vec{s} = (1,1,1,1)$ fails WUQ-1, it satisfies WWEF1. Observe that it satisfies WWEF1 since party~$1$ exceeded its quota and the second condition of WWEF1 is met as for each party $p\in\{2,\ldots,101\}$, we can take a weight-$1$ seat from party~$1$ such that $\nicefrac{ \rep{p}{\vec{s}} + 1}{v_{p}} = 1 > 0.0.4 = \nicefrac{4}{100} = \nicefrac{\rep{1}{\vec{s}}}{v_{1}}$ holds, and we have that $\nicefrac{\rep{p_{x}}{\vec{s}}}{v_{p_{x}}} = 0 = \nicefrac{\rep{p_{y}}{\vec{s}}}{v_{p_{y}}}$ for each $p_{x},p_{y}\in \{2,\ldots,101\}$.}
\end{namedexample}

\noindent
Finally, we supply the omitted proof for Proposition~\ref{Prop:WEFX-WUQ-X}, claiming that WEFX implies WUQ-X.

    \begin{proof}[Proof of Proposition~\ref{Prop:WEFX-WUQ-X}]
        \new{O}bserve that if WUQ-X is violated, then there exists a party~$p_{x}$ such that 
        $(\rep{p_{x}}{\vec{s}} - w_{t}) \geqslant \quota(p_{x}) =\omega \cdot\nicefrac{v_{p_{x}}}{n}$ for some $t\in \vecPos{p_{x}}{\vec{s}}$. 
        On the other hand, for WEFX to hold, for every party $p_{y}\in[m] \setminus\! \{p_{x}\}$ and every $t\in \vecPos{p_{x}}{\vec{s}}$ it must be the case that
        \(
            \nicefrac{\rep{p_{y}}{\vec{s}}}{v_{p_{y}}} \geqslant \nicefrac{(\rep{p_{x}}{\vec{s}} - w_{t})}{v_{p_{x}}}.
        \)
        \new{T}hese \new{inequalities} imply that $\rep{p}{\vec{s}} \geqslant \omega \cdot \nicefrac{v_{p}}{n} = \quota(p)$ for every party $p\in[m]$, which is not possible if party~$p_{x}$ exceeded $\quota(p_{x})$.
    \end{proof}

\subsection{Note on the Full-Supply Assumption}

We touch here on the assumption that each party has enough members to fill 
all available \(k\) seats, introduced in Section~\ref{sec:model} under the name of \emph{full supply}.
Specifically, we want to make the case that it is a necessary assumption. 

This is important given that a natural follow-up to our results on apportionment 
would be to lift the positive weighted-seat results to a more general setting, e.g., multiwinner voting,
and one step in this direction would be to forego full supply.
However, we find that, once making this step, even the weakest of our axioms fail to be satisfied.

\begin{namedexample}[WLQ-1 without full supply]
    Consider four parties, $\prof{v} = (3,3,3,1)$ and $\vec{w} = (15,15,1,\ldots,1)$ with $k=72$. Suppose each party $p\in [3]$ can receive two seats while party~$4$ can receive $66$ seats. So some party~$p\in [3]$ with $\quota(p) = 30$ must be assigned two weight-$1$ seats, so WLQ-1 cannot be provided.
\end{namedexample}

\begin{namedexample}[WUQ-1 without full supply]
    Consider three parties, $\prof{v} = (1,1,1)$ and $\vec{w} = (3,3,3,3,3)$. Suppose each party $p\in [2]$ can receive one seat while party~$3$ can receive three seats. So, party~$3$ with $\quota(p) = 3$ must be assigned three seats of weight $3$, which violates WUQ-1. 
\end{namedexample}

\noindent
These examples illustrate that extra care needs to be taken when generalising our work on apportionment with weighted seats to more complex settings (but see \cite[Chapter~5]{ChingomaPhD2025}).  

\end{document}